\documentclass[10pt,journal,compsoc]{IEEEtran}

\usepackage[utf8]{inputenc}
\usepackage[T1]{fontenc}
\usepackage{amsmath,amssymb,amsthm}
\usepackage{mathrsfs}
\usepackage{graphicx}
\usepackage{cite}
\usepackage{algorithm}
\usepackage{algorithmic}
\usepackage{booktabs}
\usepackage{multirow}
\usepackage{hyperref}
\usepackage{xcolor}
\usepackage{enumitem}

\theoremstyle{plain}
\newtheorem{theorem}{Theorem}

\newtheorem{proposition}{Proposition}
\newtheorem{corollary}{Corollary}

\theoremstyle{definition}
\newtheorem{definition}{Definition}

\theoremstyle{remark}
\newtheorem{remark}{Remark}

\title{Minimum Complete MR Subsets under Semantic-Mutation Fault Models:
A Support-Set Domination Boundary}

\author{Meng~Li,
        Xiaohua~Yang,
        Jie~Liu,
        and~Shiyu~Yan%
\thanks{Meng Li, Xiaohua Yang, Jie Liu, and Shiyu Yan are with the
School of Computing, University of South China, Hengyang 421001, China;
the Hunan Engineering Research Center of Software Evaluation and Testing for Intellectual Equipment, Hengyang 421001, China; and the CNNC Key Laboratory on High Trusted Computing, Hengyang 421001, China.}
\thanks{Corresponding author: Meng Li (mlemon@usc.edu.cn;
ORCID: 0000-0002-1074-1502). Shiyu Yan ORCID:
0000-0001-7626-5185.}
\thanks{Manuscript submitted to IEEE Transactions on Software Engineering.}}

\begin{document}
\maketitle

\begin{abstract}
This paper asks when MR-subset selection is a real mutant-level
requirement for minimum complete evidence in metamorphic testing
rather than a coarse fault-class counting artifact. We define a
layer-relative completeness criterion over an admitted mutant--draw
coverage universe. The central result is a support-set domination
boundary: it states when class-level abstraction is safe and when
mutant-level MR minimization is necessary. The boundary is governed by
kill-signature heterogeneity, which yields a scoped fault-signature
kernel and separates the MR-specific question from ordinary
fault-class counting. The resulting \textsc{Min-MR-Complete} problem
is Set-Cover-equivalent over the selected coverage universe, giving
NP-hardness, the classical logarithmic approximation boundary, a
greedy approximation, an exact ILP formulation, and an SMS-rank upper
bound that is not a lower bound or tight predictor. Artifact lanes
provide lane-local minimization and audit evidence; separately, route witnesses
instantiate both collapse and non-collapse regimes for the boundary
theorem and are not pooled as population-level experiments. Other
MR-class-proxy rows remain intermediate signals rather than
route-admitted witness evidence.
\end{abstract}

\begin{IEEEkeywords}
metamorphic testing, metamorphic relations, MR-subset selection,
semantic mutation, fault-class abstraction, set cover, software testing
\end{IEEEkeywords}

\IEEEpeerreviewmaketitle

\section{Introduction}
\label{sec:introduction}

Software testing for scientific computing remains challenged by the
\emph{oracle problem}: for most numerical simulators, no independent
implementation or analytic ground truth is available to decide whether a
computed output is correct~\cite{chen1998hkust,chen2018acmcsur}.
Metamorphic testing (MT) addresses this problem by replacing absolute
oracles with \emph{metamorphic relations} (MRs)---necessary properties
linking multiple program executions whose violation indicates a
fault~\cite{segura2016tse}. Over two decades, MT has been applied across
numerical simulation, machine learning, compilers, and bioinformatics,
yielding a large and rapidly growing body of identified MRs for any
given program under test (PUT). The resulting practical question is no
longer whether an MR exists, but rather \emph{which subset of identified
MRs should be retained}: not every MR contributes equally to fault
detection, and exhaustive use of all MRs imposes prohibitive test-suite
cost in long-running scientific simulations. This paper asks when that
subset problem is real and when it is only a coarse class-count problem
in disguise. If one fault-class representative already covers all
mutant-level kill patterns in its class, testing the representative is
enough. If not, collapsing the class can hide distinctions that change
which MRs are needed.

Existing approaches to MR selection have made important progress but
each leaves a structural gap. Mayer and Guderlei~\cite{mayer2006compsac}
empirically observed that subsumed MRs are redundant, providing the
first informal hierarchy, but did not formalize the optimization problem
or characterize its computational complexity. Sun et
al.~\cite{sun2022tosem} introduce Feedback-Directed MT (FDMT), which
dynamically prunes MRs using test-execution feedback, yet treats every
input MR as a priori valid and does not distinguish trivial from
non-trivial relations. Liu et al.~\cite{liu2024qrsc} propose
\textsc{MTcoverage} as an adequacy criterion---a measurement of how
well a chosen MR set covers a PUT---but stops short of solving for a
minimal MR subset under a specified fault model. Qiu et
al.~\cite{qiu2022tse} provide the most rigorous theoretical analysis to
date for MR \emph{composition}, deriving sufficient conditions under
which composite MRs preserve fault-detection power, but the
complementary problem of \emph{selection} is left open. Across all four
lines of work, three structural deficiencies persist: (i) no
fault-model-relative formalization of MR-set minimality, (ii) no
characterization of computational hardness, and (iii) no analyzed
approximation guarantee.

This paper addresses these three gaps. We instantiate the MR-set
selection problem under a specific fault model derived from
\emph{semantic-mutation operators}, denoted
$M = \{\mathrm{mut}_C, \mathrm{mut}_M, \mathrm{mut}_G, \mathrm{mut}_T,
\mathrm{mut}_F\}$, corresponding respectively to conservation,
monotonicity, convergence, trajectory, and partial-order violations.
This operator family follows prior work on semantic-mutation
measurement~\cite{p1_arxiv}. The present paper is self-contained: it
restates the operator family, defines the admitted coverage universe,
and does not require P1's SMS evaluator to define or solve
\textsc{Min-MR-Complete}. Given a candidate MR universe $R$ and the
operator family $M$, this paper asks what
\emph{minimum-cardinality M-complete subset} $S^{\star} \subseteq R$
preserves the admitted coverage obligations. P1 is treated as prior
work on measuring MR sufficiency, whereas the present paper proves
selection results over MR subsets.

Our contributions are organized around one layer-relative question:
\emph{when is MR
minimization genuinely necessary rather than an optimization veneer on
top of fault-class abstraction?} (C1) We give a direct safety test for
class abstraction. The test says whether a class representative's MR
support is enough for every admitted mutant it represents. When the
test passes, class-level and mutant-level selection have the same
feasible MR subsets; when it fails, a graded counterexample shows that
class abstraction can miss an arbitrarily large part of the minimum MR
set. Section~\ref{sec:formalization} states this as the
support-set-domination theorem; homogeneity is only a simple corollary.
The committed claim ledger separately records true-fault-class collapse
and non-collapse witnesses for both regimes in \S\ref{sec:empirical};
these witnesses show that the paper contains evidence on both sides of
the class-abstraction boundary.
(C2) We separate two questions that are often blurred. The algorithmic
problem is still Set Cover, but the MR-specific question is whether the
kill signatures contain distinctions that fault-class labels cannot
preserve. Fault-class count alone is insufficient; the relevant object
is the fault-signature kernel, not the number of labels. Our contribution
is this structural diagnosis of when Set-Cover-style minimization is
substantively needed for MR selection, not a new worst-case complexity class
(\S\ref{sec:algorithms}). (C3) We connect P1's
measurement axis to T2's selection axis through the SMS-rank upper bound
$|S^{\star}| \leq \mathrm{rank}_{\mathrm{SMS}}$, while proving
and reporting counterexamples showing that SMS-rank is neither a lower
bound nor a tight predictor. (C4) We keep the implementation auditable
with a greedy $(1 + \ln q)$ approximation, an exact ILP or deterministic
enumeration certificate for admitted rows, and a certified partial
reduction relation $\rho_L$ for the implemented finite and bounded-SMT
fragments.

The novelty boundary is deliberate: we do not claim a new worst-case
complexity class, we do not claim a better approximation ratio than Set
Cover, and we do not treat route witnesses as population-level rates.
The positive claim is the support-set domination boundary that explains
when class abstraction is safe, together with a scoped fault-signature
kernel that records the MR-specific hierarchy information needed for
audit. The SMS-rank result is an auditable upper-bound bridge from the
measurement axis to the selection axis, not a predictor of
$|S^{\star}|$.

The remainder of the paper is organized as follows.
Section~\ref{sec:background} reviews metamorphic testing, the
semantic-mutation operator family M, and the classical Set Cover
problem. Section~\ref{sec:related_work} contextualizes T2 against
prior MR-selection and composition work and against the recently
proposed AIM framework. Section~\ref{sec:formalization} states the
$\textsc{Min-MR-Complete}$ problem, proves the Set-Cover-equivalent
hardness boundary, states the support-set domination characterization,
the fault-signature kernel, and the SMS-rank upper bound, and then gives
the certified partial $\rho_L$ construction. Section~\ref{sec:algorithms}
presents the greedy and ILP algorithms together with the operational
construction of $\rho_L$.
Section~\ref{sec:empirical} reports two evidence layers: lane-local
artifact evidence for minimization and route witnesses for the collapse
boundary. The empirical section checks whether the formal boundary
appears in real artifact-backed MR-selection instances while keeping
MR-class-proxy evidence separate from true-fault-class evidence.
Section~\ref{sec:threats} discusses threats to validity, and
Section~\ref{sec:discussion} draws implications and outlines future
directions.

\section{Background}
\label{sec:background}

This section reviews the three technical foundations required to state
the \textsc{Min-MR-Complete} problem and analyze its complexity:
metamorphic testing (\S\ref{subsec:bg_mt}), the semantic-mutation
operator family $M$ used as the paper's fault model
(\S\ref{subsec:bg_mutop}), and the classical Set Cover problem
(\S\ref{subsec:bg_setcover}). Section \ref{subsec:bg_sms} briefly
summarizes related SMS / MS measurement work for context; the present
paper defines the coverage universe and optimization problem
self-containedly.

\subsection{Metamorphic Testing}
\label{subsec:bg_mt}

Metamorphic testing (MT) was introduced by Chen et
al.~\cite{chen1998hkust} as a strategy for testing programs that lack
a deterministic oracle. The core construct is a \emph{metamorphic
relation} (MR): a necessary property linking two or more program
executions, the violation of which indicates a fault. Formally, given
a program $P$ under test, an MR is a tuple
$r = \langle T_r, R_r \rangle$ where $T_r$ is an
\emph{input transformation} mapping a source test case $x$ to a
follow-up test case $x' = T_r(x)$, and $R_r$ is an \emph{output
relation} that must hold between $P(x)$ and $P(x')$ if $P$ is
correct. A classical example for a $\sin$ implementation is
$T_r(x) = \pi - x$ paired with the output relation $R_r(y,
y') \equiv y = y'$.

Over two and a half decades MT has matured from a niche oracle
substitute into a general-purpose testing methodology applied across
numerical simulation, machine learning, compilers, databases, and
bioinformatics~\cite{chen2018acmcsur,segura2016tse}. The methodology
has expanded along two axes. The first axis is MR identification:
domain-specific MR catalogs, MR-pattern hierarchies, and recently
LLM-assisted MR generation have driven the candidate MR set $R$ for
any given PUT from a handful to dozens or hundreds. The second axis is
MR exploitation: test-case prioritization, MR composition, and
adaptive MR selection have sought to extract maximum fault-detection
power from a given $R$. The present work belongs to the second axis,
and specifically to the sub-problem of selecting a minimum-cardinality
subset of $R$ that retains a target fault-detection guarantee. The
guarantee is made precise by binding selection to a fault model, the
subject of the next subsection.

\subsection{Semantic-Mutation Operator Family $M$}
\label{subsec:bg_mutop}

A fault model in the present work is an indexed family of mutation
operators acting on program code. We use the
semantic-mutation operator family $M = \{\mathrm{mut}_C,
\mathrm{mut}_M, \mathrm{mut}_G, \mathrm{mut}_T, \mathrm{mut}_F\}$
as a five-class domain-relative taxonomy for scientific-computing
PUTs. The same taxonomy is compatible with prior MR-sufficiency
measurement work~\cite{p1_arxiv}, but all definitions needed by this
paper are restated here. Each operator targets one class of physical
invariants:

\begin{itemize}
\item $\mathrm{mut}_C$: \emph{conservation} violations
(mass, momentum, energy balance);
\item $\mathrm{mut}_M$: \emph{monotonicity} violations
(dose--response, weak-formulation monotone schemes);
\item $\mathrm{mut}_G$: \emph{convergence} violations
(numerical-scheme order-of-accuracy reduction or divergence);
\item $\mathrm{mut}_T$: \emph{trajectory} violations
(time-series shape disruption, deterministic-replay break);
\item $\mathrm{mut}_F$: \emph{partial-order / fundamental}
violations (causality, ordering invariants, type-domain consistency).
\end{itemize}

The operator family $M$ is fixed across all PUTs in the present work:
we treat $M$ as a domain-relative fault taxonomy whose five classes
collectively cover the principal physically-meaningful failure modes
represented by the artifact lanes in \S\ref{sec:empirical}. We do
not claim $M$ exhausts all possible
faults; the formalization in \S\ref{sec:formalization} is parametric
in the chosen fault model, and the $M$-instantiation supplies the
concrete empirical evaluation.

A scientific limitation of this fault model is the \emph{first-order
limit} assumption: the family $M$ is restricted to single-operator
mutations, i.e., we do not analyze higher-order interactions
$\mathrm{mut}_i \circ \mathrm{mut}_j$.
Higher-order coupling between conservation violations and convergence
violations, for example, is left to future work; the present analysis
is rigorous within the first-order regime.

\subsection{The Set Cover Problem}
\label{subsec:bg_setcover}

The complexity-theoretic anchor of \textsc{Min-MR-Complete} is the
classical Set Cover problem. An instance of (unweighted) Set Cover is
a triple $(U, \mathcal{C}, k)$ where $U$ is a finite universe,
$\mathcal{C} \subseteq 2^U$ is a family of covering sets, and $k$ is a
positive integer; the decision question is whether there exists a
subfamily $\mathcal{C}' \subseteq \mathcal{C}$ with $|\mathcal{C}'|
\leq k$ such that $\bigcup_{C \in \mathcal{C}'} C = U$. Karp's seminal
1972 paper established Set Cover as one of the original
NP-complete problems via reduction from Vertex
Cover~\cite{karp1972}. Two complementary algorithmic results bound
the tractability of the minimization version:

\paragraph{Upper bound (Chv\'atal 1979).} A natural greedy algorithm
that repeatedly selects the covering set covering the largest number
of still-uncovered elements achieves a $(1 + \ln n)$-approximation
ratio, where $n = |U|$~\cite{chvatal1979}. The proof is by a charging
argument balancing each greedy step against the LP relaxation of the
covering linear program.

\paragraph{Lower bound (Feige 1998).} Feige~\cite{feige1998}
established that, unless $\mathrm{P} = \mathrm{NP}$, no
polynomial-time algorithm can approximate Set Cover within
$(1 - o(1))\ln n$. Combined with Chv\'atal's bound, this closes the
approximation question up to lower-order terms: the greedy algorithm
is asymptotically optimal among polynomial-time approximations.

These three results, namely NP-hardness and the matching $\ln n$
upper and lower bounds with $n = |U|$ the universe size, are the
lemmas on which our complexity analysis in \S\ref{sec:formalization}
rests; there the universe is instantiated as the admitted observation
columns, so $n$ specializes to $q = |\Omega|$ and the bounds read
$\ln q$. The reduction from Set Cover to
\textsc{Min-MR-Complete} is given there.

\subsection{P1 SMS / MS Measurement Framework}
\label{subsec:bg_sms}

Prior work P1~\cite{p1_arxiv} introduced the semantic-mutation score
(SMS) as a scalar measure of how well a set of MRs detects faults
generated by a semantic-mutation operator family $M$. It also showed
that SMS generalizes the classical mutation score (MS): when $M$
degenerates to a single class containing all syntactic mutants, SMS
reduces to MS via a constant-map degeneracy.

The present paper uses this line of work only as prior context and
restates the operator family needed for a self-contained optimization
problem. Our decision object is not the SMS value of a fixed MR set.
It is the choice of a minimum-cardinality subset of candidate MRs that
covers an admitted observation-column universe. In the empirical
evaluation, each artifact lane constructs its own observed kill matrix,
declared coverage universe, and validation report; these committed
artifact matrices, rather than an external P1 evaluator, instantiate
the optimization problem formalized next.

\section{Related Work}
\label{sec:related_work}

We organize related work along four axes that together delineate the
position of T2 within the metamorphic-testing literature: MR
identification and selection (\S\ref{subsec:rw_select}),
fault-detection-driven MR exploitation
(\S\ref{subsec:rw_fdmt}), MT adequacy and coverage criteria
(\S\ref{subsec:rw_coverage}), and MR composition
(\S\ref{subsec:rw_compose}). Section \ref{subsec:rw_own} positions
T2 against our own prior work on MR generation in scientific computing.

Table~\ref{tab:closest_work} states the three closest comparison points
up front. They are selected because they attack the three places where
the present paper could otherwise be misunderstood: minimization in MT,
formal MR theory, and scientific-computing MT with mutation evaluation.

\begin{table*}[t]
\centering
\caption{Closest work and the ABD question each work leaves open.}
\label{tab:closest_work}
\scriptsize
\begin{tabular}{@{}p{0.18\linewidth}p{0.22\linewidth}p{0.25\linewidth}p{0.27\linewidth}@{}}
\toprule
Closest work & Decision object & Guarantee or evidence supplied & What it cannot answer for ABD \\
\midrule
\textsc{AIM}~\cite{aim_bayati_briand_2024} &
Test-input set for a fixed MR catalog &
Input-side minimization for metamorphic security testing, with empirical
coverage preservation on Web systems &
Whether the MR set itself is minimum complete evidence under a
semantic-mutation fault model; whether class abstraction collapses \\
Qiu et al.~\cite{qiu2022tse} &
Composite MRs generated from existing MRs &
Algebraic sufficient conditions under which MR composition preserves or
does not amplify fault-detection power &
Which fixed MR subset should be retained; whether representative
fault-class supports dominate all admitted mutant-level obligations \\
Yan and Zhu~\cite{yan2025elliptic} &
MR derivation and fault-detection evaluation for elliptic-equation
scientific programs &
Scientific-computing MT process with model-derived MRs and mutation-style
fault detection evidence &
Whether the derived MR catalog has a minimum complete subset; whether
mutation-level evidence can be safely collapsed to coarser fault classes \\
\bottomrule
\end{tabular}
\end{table*}

\subsection{MR Identification and Selection}
\label{subsec:rw_select}

The first systematic treatment of MR redundancy is due to Mayer and
Guderlei~\cite{mayer2006compsac}, who introduced an informal
\emph{subsumption} hierarchy: an MR $r_a$ is subsumed by $r_b$ if
every fault detected by $r_a$ is also detected by $r_b$. They
demonstrated empirically on numerical PUTs (a determinant
implementation and an integer factorization routine) that subsumed
MRs are dispensable without loss of fault-detection power. Their
contribution remains foundational, but three structural limitations
motivate T2. First, subsumption is presented as a binary observation
rather than as an instance of a general optimization problem; no
universe, fault model, or covering structure is defined. Second, the
notion of a \emph{minimum} MR set is not stated as a computational
problem and its hardness is not analyzed. Third, the analysis is
relative to ad-hoc faults injected into the empirical PUTs rather
than to a domain-relative fault model. Our \textsc{Min-MR-Complete}
formalization (\S\ref{sec:formalization}) lifts subsumption to a Set
Cover instance, supplies the missing complexity bounds, and binds
selection to the semantic-mutation operator family $M$.

A recent contribution by Bayati Chaleshtari, Marquer, Pastore, and
Briand introduces the \textsc{AIM}
framework~\cite{aim_bayati_briand_2024}, which combines a
clustering-based black-box approach, a genetic-algorithm input
selector, and a problem-reduction component to automatically minimize
the \emph{input set} on which metamorphic relations are executed,
specifically for Web-system security testing (evaluated on Jenkins
and Joomla, with reported 84\% and 82\% time reductions while
preserving vulnerability coverage). \textsc{AIM} is complementary
rather than overlapping with T2 along orthogonal cost axes:
\textsc{AIM} minimizes the test-\emph{input} set on which a fixed MR
catalog is executed, whereas T2 minimizes the
metamorphic-\emph{relation} set under a fault model derived from
semantic-mutation operators. The two contributions are composable in
a future stacked pipeline: T2 produces a minimum-cardinality
$M$-complete MR subset, and \textsc{AIM} can then reduce the input
set on which that MR subset is executed.

\subsection{Fault-Detection-Driven MR Exploitation}
\label{subsec:rw_fdmt}

Sun et al.~\cite{sun2022tosem} propose Feedback-Directed Metamorphic
Testing (FDMT), which dynamically prunes the active MR set during
test execution using feedback signals such as MR-violation frequency,
follow-up test coverage delta, and execution time. FDMT achieves
substantial empirical speedups by deactivating MRs that contribute
diminishing detection value at runtime. The methodology is orthogonal
to T2 along two axes. First, FDMT treats each input MR as
\emph{a priori valid}: the validation problem is presumed solved by
the user or by an external tool such as \textsc{AIM}. Second, FDMT
operates in the \emph{exploitation} phase under a fixed candidate
set, whereas T2 operates in the \emph{selection} phase before
exploitation begins. The two methods can be stacked: T2 produces a
minimum-cardinality $M$-complete subset $S^{\star}$, and FDMT then
dynamically reorders or prunes $S^{\star}$ during execution. We do
not evaluate the stacked configuration empirically in the present
work; it is identified as future work in \S\ref{sec:discussion}.

A second methodological gap in FDMT, and in feedback-directed
approaches generally, is that the pruning criterion is itself a
heuristic. FDMT cannot guarantee that the pruned set retains
detection power against any specified fault model; its guarantee is
empirical, restricted to the faults injected during the experimental
campaign. \textsc{Min-MR-Complete} replaces this empirical guarantee
with a model-relative one: completeness is defined modulo $M$, and
the greedy and ILP algorithms (\S\ref{sec:algorithms}) supply
provable approximation and exact-optimum guarantees respectively.

\subsection{MT Adequacy and Coverage Criteria}
\label{subsec:rw_coverage}

Liu et al.~\cite{liu2024qrsc} introduce \textsc{MTcoverage}, an
adequacy criterion measuring how well a chosen MR set $S$ exercises
the semantic surface of a PUT. \textsc{MTcoverage} extends classical
structural-coverage notions (statement, branch, condition) by
augmenting them with MR-derived follow-up paths: an MR is said to
contribute to coverage if its follow-up execution traverses program
states unreached by source-only execution. The criterion is valuable
as a sufficiency measure --- it permits a user to ask whether a given
$S$ is rich enough to claim adequate testing --- but it does not
solve the dual optimization problem: given a candidate universe $R$
and a target coverage threshold, find a minimum-cardinality subset
$S^{\star} \subseteq R$ attaining the threshold.

The relationship between \textsc{MTcoverage} and
\textsc{Min-MR-Complete} is analogous to the relationship between
classical mutation score (MS) and the test-suite-reduction problem
under MS. \textsc{MTcoverage} measures; \textsc{Min-MR-Complete}
optimizes. The two are mutually informative: an
\textsc{MTcoverage}-validated MR set can be used as the candidate
universe $R$ for our pipeline, and a \textsc{Min-MR-Complete} output
$S^{\star}$ can in turn be \textsc{MTcoverage}-evaluated to verify
that downsizing has not degraded coverage. We treat
\textsc{MTcoverage} as a downstream evaluation criterion in
\S\ref{sec:empirical}; integrating it as a first-class constraint in
the optimization is identified as future work.

\subsection{MR Composition}
\label{subsec:rw_compose}

Qiu et al.~\cite{qiu2022tse} provide the most rigorous theoretical
analysis to date for MR \emph{composition}: given two MRs $r_a$ and
$r_b$, under what conditions does the composite MR $r_a \circ r_b$
retain or amplify fault-detection power? They derive sufficient
conditions involving commutativity of input transformations and
composability of output relations, and prove a no-amplification
theorem identifying cases where composition is strictly dominated by
either constituent. Their contribution is orthogonal to T2 along the
identification--composition axis: composition operates within a fixed
candidate set, generating derived MRs, whereas selection chooses a
subset from a fixed set. The orthogonality is genuine in the sense
that the two problems address different test-design degrees of
freedom; composition expands $R$, selection contracts it.

A point of methodological resonance is that both Qiu et al. and the
present work derive their guarantees from a small set of well-defined
algebraic conditions on MR structure rather than from empirical
benchmark performance. In our degeneracy theorem
(Theorem~\ref{thm:degeneracy}, \S\ref{sec:formalization}), the
identity-composition subcase ($\rho_L$-Subcase C) is in fact
constructed using a fragment of Qiu et al.'s composition framework:
when two MRs in $S^{\star}$ are identity-composable under
$\rho_L$-restriction, their composition yields the canonical
representative of the resulting equivalence class. The full
integration of composition into the optimization model --- searching
not over $R$ but over the closure of $R$ under composition --- is
left as future work.

\subsection{Our Prior Work and P1/T2 Positioning}
\label{subsec:rw_own}

The present work builds on three prior contributions by our group:
Zhang et al.~\cite{zhang2022icrms} on MR identification for
safety-critical nuclear simulation codes, Huang et
al.~\cite{huang2022iaecst} on MR-based testing of thermal-hydraulic
solvers, and Liu et al.~\cite{liu2022icftic} on numerical-method MR
catalogs spanning finite-difference and finite-element schemes.
These works supplied the candidate MR universes for the empirical
PUTs of \S\ref{sec:empirical} and motivated the choice of the seven
numerical-method families covered by our evaluation.

Prior work by our group also studied semantic-mutation measurement
for MR sufficiency~\cite{p1_arxiv}. P1 is treated as prior work rather
than as a dependency of the present submission. The present paper is
self-contained: it restates the operator family, defines the admitted
coverage universe, and does not require P1's SMS evaluator to define
or solve \textsc{Min-MR-Complete}. The difference is the decision
object. P1 studies a scalar sufficiency measurement, whereas the
present paper studies subset selection, support-set domination, and
minimum complete MR evidence under the declared coverage universe.

\subsection{Recent Adjacent Work on MR Selection}
\label{subsec:rw_adjacent}

Three lines of 2024--2025 work intersect partially with T2's MR
selection topic and warrant explicit positioning.

Ying et al.~\cite{ying2024stvr} introduce
MRGS-ART, an adaptive-random-testing-guided algorithm that selects
metamorphic-relation/metamorphic-group (MR-MG) pairs to maximize an
even-distribution metric over the source and follow-up input
domains. MRGS-ART optimizes the empirical effectiveness of an
existing MR pool but does not formalize completeness under a
fault-model and does not provide algorithmic complexity bounds. T2
differs in three respects: (i) the selection objective is
\emph{certified $M$-completeness} relative to the semantic-mutation
operator family, not even input-domain distribution; (ii) the
analysis yields NP-hardness and $(1 + \ln q)$ approximation bounds,
not empirical effectiveness reports; (iii) the empirical domain is
scientific-computing PUTs rather than generic.

Hyun et al.~\cite{hyun2025arxiv} frame MR selection as a search-based
optimization problem in the context of large language model robustness
testing, minimizing LLM execution cost while maximizing failure
detection. The work shares the optimization perspective with T2 but
operates in a fundamentally different domain (LLM robustness vs.
numerical scientific computing), uses heuristic search (genetic
algorithms and equivalents) rather than the Set-Cover-theoretic
analysis presented here, and does not address fault-model-relative
completeness or layer-restriction reductions.

Li et al.~\cite{li2025survey} provide a comprehensive 88-publication
survey of two decades of MR automation, categorized into detection,
selection, and generation. The survey constitutes the
state-of-the-art reference for MR automation taxonomy and
establishes that MR selection is now a named and active subfield.
The present work positions itself accordingly: ``MR selection'' is
no longer claimed as the novel contribution. T2's defensible
contribution is narrower and load-bearing: the conjunction of
(i) Set-Cover-theoretic NP-hardness and matching $(1 + \ln q)$
approximation, (ii) the semantic-mutation operator family $M$ as
fault model, (iii) the $\rho_L$ degeneracy theorem on three
mechanically checked subcases (Theorem~\ref{thm:degeneracy}), and
(iv) the artifact-gated scientific-computing empirical route.

\section{Problem Formalization}
\label{sec:formalization}

This section formalizes the optimization problem
\textsc{Min-MR-Complete}, establishes its computational hardness, gives
two scoped structural consequences used by the artifact evidence ledger, and
defines an auditable degeneracy map relating $M$-complete MR subsets to
fault-class-restricted minimum test suites. The
treatment binds the abstract Set Cover framework of
\S\ref{subsec:bg_setcover} to the admitted semantic-mutant observation
universe induced by the semantic-mutation operator family $M$ of
\S\ref{subsec:bg_mutop} and the committed per-MR detection matrix
used by each artifact lane.

\subsection{The \textsc{Min-MR-Complete} Problem}
\label{subsec:form_problem}

We fix throughout the section a candidate MR universe $R = \{r_1,
\dots, r_n\}$, a semantic-mutation operator family
$M = \{\mathrm{mut}_1, \dots, \mathrm{mut}_m\}$, and a program family
$\mathcal{P}$ to which the operators in $M$ apply syntactically. The
coverage universe is the selected coverable observation universe
$\Omega = \{\omega_1,\dots,\omega_q\}$: these are admitted
mutant--draw observations with at least one detecting MR in the
candidate universe; observations with all-zero detection columns are
outside the completeness denominator and are reported as residual
testability gaps rather than hidden completeness obligations. When
identical MR-detection signatures are merged, the resulting quotient is
written $Q$. The semantic operator family $M$ labels fault strata
through a semantic-stratum map $\lambda:\Omega\to M$ and is not itself
the coverage universe; this separation prevents a five-column
class-cover collapse. When explicit true fault-class labels are
available, we write the true fault-class map $F:\Omega\to\Phi$ for the
map from admitted coverable observations to true fault classes. The map
$F$ is used only when explicit true fault-class labels are available;
otherwise the theory and algorithms remain defined over $\Omega$ and
$D$ without promoting proxy labels. Each $(r,\omega_j)\in R\times
\Omega$ has a Boolean detection indicator $d_{r,j}\in\{0,1\}$ recording
whether the MR $r$ violates its output relation on the admitted
coverable observation $\omega_j$. The collection
$D=(d_{r,j})_{r\in R,j\in[q]}$ is the \emph{detection matrix} input of
the present section; if $Q$ is used, the same notation applies after
quotienting identical columns.

\begin{definition}[$M$-complete subset]
\label{def:mcomplete}
A subset $S \subseteq R$ is \emph{$M$-complete} with respect to
detection matrix $D$ and selected coverable universe $\Omega$ if for
every admitted coverable observation column
$j \in [q]$ there exists $r \in S$ with $d_{r,j} = 1$. Equivalently,
$S$ covers every element of the selected observation universe
$\Omega$ or quotient universe $Q$.
\end{definition}

\begin{definition}[\textsc{Min-MR-Complete}]
\label{def:minmrcomplete}
The \textsc{Min-MR-Complete} optimization problem takes as input
$(R, M, \Omega, D)$, or $(R, M, Q, D)$ after kill-signature
deduplication, and outputs a minimum-cardinality $M$-complete subset
$S^{\star} \subseteq R$; its decision version takes an additional
integer $k$ and asks whether an $M$-complete subset of size at most
$k$ exists.
\end{definition}

Definition~\ref{def:minmrcomplete} is parametric in the choice of
fault model $M$: any indexed family of mutation operators that labels
admitted observations and produces a Boolean detection matrix can be
substituted. The empirical evaluation in \S\ref{sec:empirical}
instantiates $M$ with the semantic-mutation family of
\S\ref{subsec:bg_mutop}, but the complexity and approximability
results derived below depend only on the abstract set-cover structure
over $\Omega$ or $Q$.

\subsection{NP-Hardness via Reduction from Set Cover}
\label{subsec:form_nphard}

\begin{theorem}[NP-hardness of \textsc{Min-MR-Complete}]
\label{thm:nphard}
The decision version of \textsc{Min-MR-Complete} is NP-complete; the
optimization version is NP-hard.
\end{theorem}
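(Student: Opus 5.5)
The plan is the standard two-part argument: membership in NP for the decision version, then a polynomial-time many-one reduction from the decision version of Set Cover, which is NP-complete by Karp~\cite{karp1972} (\S\ref{subsec:bg_setcover}). NP-hardness of the optimization version then follows at once. Any algorithm returning a minimum-cardinality $M$-complete subset $S^{\star}$ decides the decision version by comparing $|S^{\star}|$ with $k$.

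\textbf{Membership in NP.} The certificate is a subset $S\subseteq R$, encoded as an $n$-bit vector. The verifier checks that $|S|\le k$, and then checks $M$-completeness in the sense of Definition~\ref{def:mcomplete}: for each column $j\in[q]$ it scans the rows $r\in S$ for some $d_{r,j}=1$. This takes $O(nq)$ time, which is polynomial in the size of the input $(R,M,\Omega,D,k)$, since $D$ alone has $nq$ entries.

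\textbf{Reduction.} Let $(U,\mathcal{C},k)$ be a Set Cover instance with $U=\{u_1,\dots,u_q\}$ and $\mathcal{C}=\{C_1,\dots,C_n\}$. First discard the trivial case: if some $u_j$ lies in no $C_i$, the reduction outputs a fixed \emph{no}-instance, for example an instance with a single MR, a single covered column, and $k=0$. Otherwise the reduction builds the following instance.
\begin{itemize}
\item One MR $r_i$ for each $C_i$, so that $R=\{r_1,\dots,r_n\}$.
\item One admitted observation $\omega_j$ for each $u_j$, so that $\Omega=\{\omega_1,\dots,\omega_q\}$.
\item Detection entries $d_{r_i,j}=1$ if and only if $u_j\in C_i$.
\item The same bound $k$.
\item A fixed singleton operator family $M=\{\mathrm{mut}_1\}$ with the constant stratum map $\lambda\equiv \mathrm{mut}_1$. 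This is legitimate because Definition~\ref{def:minmrcomplete} is parametric in $M$.
\end{itemize}
The construction is clearly polynomial. Correctness is a direct check: $S=\{r_i: i\in I\}$ is $M$-complete with $|S|\le k$ if and only if $\bigcup_{i\in I}C_i=U$ with $|I|\le k$. Both directions are read off entry by entry from the definition of $D$.

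\textbf{Main obstacle.} The only delicate point is fidelity to the paper's admissibility conventions for $\Omega$. Every column must be coverable, meaning it has at least one detecting MR, because all-zero columns are excluded from the completeness denominator. If such columns were silently dropped, an infeasible Set Cover instance would map to a feasible one. The preprocessing step above handles this: it detects uncovered elements in polynomial time and routes those instances to a \emph{no}-instance. A second, minor point is that one might worry the detection matrix must arise from actual mutants and programs. The plan treats $(R,M,\Omega,D)$ as the abstract input, exactly as Definition~\ref{def:minmrcomplete} states, so any Boolean matrix whose columns are all nonzero is a valid instance.

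Since the map from $(U,\mathcal{C})$ to $(R,\Omega,D)$ is a bijection on feasible solutions that preserves cardinality, it is also an approximation-preserving reduction. This same observation will later transfer Feige's lower bound~\cite{feige1998} in terms of $q$.
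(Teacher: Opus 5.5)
Your proposal is correct and follows the paper's proof. Both establish NP membership by a column-wise scan of $D$, and both prove hardness with the same reduction from Set Cover: one admitted observation per element, one MR per covering set whose detection row is that set's indicator vector, a fixed stratum label, and the same budget $k$. Your preprocessing step, which sends Set Cover instances with an uncovered element to a fixed \emph{no}-instance, is a small tightening that the paper's sketch omits. It is needed because, under the paper's own convention, all-zero columns are excluded from $\Omega$, so without it an infeasible Set Cover instance could map to a \emph{yes}-instance.
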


\begin{proof}[Proof sketch]
Membership in NP is immediate: given a candidate subset $S$, $M$-com\-pleteness
can be verified in $O(|S| \cdot q)$ time by column-wise scan of $D$.

For NP-hardness we reduce from the decision version of Set Cover,
which is NP-complete by Karp~\cite{karp1972}. Given a Set Cover
instance $(U, \mathcal{C}, k)$ with $U = \{u_1, \dots, u_n\}$ and
$\mathcal{C} = \{C_1, \dots, C_q\}$, construct a
\textsc{Min-MR-Complete} instance as follows:
\begin{enumerate}[label=(\alph*)]
\item create one admitted observation $\omega_i$ for each element
$u_i \in U$ and assign any fixed semantic stratum label in $M$;
\item for each $C_j \in \mathcal{C}$ create a candidate MR
$r_j$ with detection vector $(d_{r_j, i})_{i \in [n]} =
\chi_{C_j}(u_i)$, where $\chi_{C_j}$ is the indicator function of
$C_j$;
\item set the cardinality budget for the \textsc{Min-MR-Complete}
decision instance to $k$.
\end{enumerate}
The construction is polynomial in $|U| + |\mathcal{C}|$. A subfamily
$\mathcal{C}' \subseteq \mathcal{C}$ of size at most $k$ covers $U$
if and only if the corresponding set $\{r_j : C_j \in
\mathcal{C}'\}$ is an $M$-complete subset of size at most $k$, by
construction of the detection vectors. The reduction therefore
preserves the answer to the decision question, establishing
NP-hardness. A full proof, including the cardinality-preserving
bijection between optimal Set Covers and optimal $M$-complete subsets,
is given in the supplementary material.
\end{proof}

\begin{remark}
\label{rem:reduction_direction}
The reduction is from Set Cover \emph{to}
\textsc{Min-MR-Complete}, establishing that \textsc{Min-MR-Complete}
is at least as hard. The reverse direction --- whether every
\textsc{Min-MR-Complete} instance can be reduced to a Set Cover
instance --- is immediate from
Definitions~\ref{def:mcomplete}--\ref{def:minmrcomplete}: take
$U := [q]$ and $\mathcal{C} := \{ \{ j : d_{r, j} = 1 \} : r \in R
\}$. The two problems are thus polynomial-time equivalent, and all
positive and negative approximation results for Set Cover transfer
unchanged.
\end{remark}

\subsection{Approximation Lower Bound}
\label{subsec:form_approx}

The polynomial-time equivalence noted in
Remark~\ref{rem:reduction_direction} permits direct transfer of the
classical approximation bounds of Set Cover.

\begin{corollary}[Approximation hardness]
\label{cor:approx}
Unless $\mathrm{P} = \mathrm{NP}$, no polynomial-time algorithm
approximates \textsc{Min-MR-Complete} within a factor of
$(1 - o(1)) \ln q$, where $q$ is the size of the selected coverage
universe $\Omega$ or quotient universe $Q$. This lower bound is matched
up to lower-order terms by the natural greedy algorithm, which achieves a $(1 + \ln q)$
approximation.
\end{corollary}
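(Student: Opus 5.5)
The plan is to prove Corollary~\ref{cor:approx} in two halves, hardness and the matching upper bound, both routed through the polynomial-time equivalence of Remark~\ref{rem:reduction_direction}. The one thing to watch is that the parameter in Feige's bound, $n=|U|$, lands on the coverage-universe size $q$ on our side.

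\textbf{Lower bound.} This half will be a contrapositive argument through the reduction in the proof of Theorem~\ref{thm:nphard}.

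Suppose a polynomial-time algorithm $\mathcal{A}$ approximates \textsc{Min-MR-Complete} within $(1-o(1))\ln q$. Given a Set Cover instance $(U,\mathcal{C})$, build the instance of Theorem~\ref{thm:nphard}: one admitted observation per element, so $q=|U|$, and one MR per covering set, with detection vectors equal to the indicator vectors.

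Feasibility is preserved in both directions. Every element of $U$ lies in some $C_j$, as we may assume for any feasible Set Cover instance. Hence every column has a detecting MR, so all constructed observations are coverable and genuinely belong to $\Omega$. The map $r_j\leftrightarrow C_j$ is then a cardinality-preserving bijection between $M$-complete subsets and set covers, so the two optima coincide.

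Running $\mathcal{A}$ and mapping its output back therefore gives a $(1-o(1))\ln|U|$-approximate set cover in polynomial time. This contradicts Feige~\cite{feige1998} unless $\mathrm{P}=\mathrm{NP}$.

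For the quotient universe $Q$, I will note one further point. Merging identical columns can shrink $q$, but it never changes the feasible MR subsets. If $U$ contains two elements lying in exactly the same sets, deleting one of them is a polynomial-time preprocessing step that also leaves the feasible subfamilies unchanged. Feige's hard instances can therefore be taken column-distinct without loss, and then $|Q|=|\Omega|=|U|$. This step, checking that the size parameter is not silently altered by deduplication or by the coverability admission rule, is the main obstacle. I will settle it by observing that deduplication only decreases $q$, and a bound stated in a smaller $q$ is only stronger on the upper-bound side. On the lower-bound side, I will invoke hard instances whose element signatures are pairwise distinct.

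\textbf{Upper bound.} I will apply Chv\'atal's analysis~\cite{chvatal1979} to the Set Cover image given by Remark~\ref{rem:reduction_direction}, with $U:=[q]$ and $\mathcal{C}:=\{\{j:d_{r,j}=1\}:r\in R\}$.

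This image is feasible by the definition of $\Omega$, since every admitted column has at least one detecting MR. Greedy selection of the MR covering the most uncovered columns is then exactly the Set Cover greedy algorithm on that image. Its output has size at most $H(q)\cdot|S^\star|\le(1+\ln q)|S^\star|$.

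The argument runs unchanged on $Q$, with $q$ replaced by $|Q|$. Comparing this with the lower bound shows the two match up to the lower-order terms stated in Corollary~\ref{cor:approx}.
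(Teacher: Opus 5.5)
Your proposal is correct and is the paper's argument: Feige's threshold is carried over through the Set Cover reduction for the lower bound, and Chv\'atal's greedy analysis is applied to the image $U=[q]$ for the upper bound. You also spell out checks that the paper's two-line proof leaves implicit, namely coverability of the constructed columns, using the forward reduction of Theorem~\ref{thm:nphard} for hardness and the reverse map of Remark~\ref{rem:reduction_direction} for greedy, and why passing to $Q$ does not disturb the size parameter. One caveat you share with the paper is bibliographic: Feige's 1998 theorem assumes $\mathrm{NP}\not\subseteq\mathrm{DTIME}(n^{O(\log\log n)})$, and the ``unless $\mathrm{P}=\mathrm{NP}$'' form you invoke is the later Dinur--Steurer strengthening.
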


\begin{proof}
The lower bound is Feige's~\cite{feige1998} inapproximability theorem
for Set Cover, transferred via the equivalence of
Remark~\ref{rem:reduction_direction}. The matching upper bound is
Chv\'atal's~\cite{chvatal1979} greedy analysis. Both bounds apply
verbatim to \textsc{Min-MR-Complete} with $n$ replaced by the
coverage-universe parameter $q$.
\end{proof}

Corollary~\ref{cor:approx} closes the approximation question for
\textsc{Min-MR-Complete} up to lower-order terms: any
polynomial-time algorithm that improves on $(1 + \ln q)$
asymptotically would yield the analogous improvement for Set Cover
and refute Feige's bound, contradicting the conjecture
$\mathrm{P} \neq \mathrm{NP}$. The practical implication is that the
greedy algorithm presented in \S\ref{sec:algorithms} is the natural
algorithmic ceiling for large-scale instances; the ILP formulation
also presented there is exact but exponential-worst-case and is
viable only on instances of moderate $|R|$. The empirical evaluation
in \S\ref{sec:empirical} reports actual runtimes only for
artifact-admitted routes with explicit solver provenance.

\subsection{Scoped Structural Consequences}
\label{subsec:form_structural}

The hardness result above is worst-case: it does not imply that every
MR-induced instance has no exploitable structure. The ABD route records
two scoped facts, both stated over the committed detection-matrix
semantics and neither used to weaken the Set Cover lower bound.

\begin{theorem}[Fault-signature dominance kernel]
\label{prop:fault_signature_kernel}
For every admitted obligation $\omega_j$, let
$K_j=\{r\in R:d_{r,j}=1\}$ be its MR support set. The construction uses
three distinct quotients. The \emph{constraint-equivalence quotient}
globally merges obligations with identical MR support sets; this is the
optimization quotient because such obligations impose the same hitting
constraint. The \emph{hierarchy-observable quotient} retains the pair
of fault class and support set only for hierarchy-observable reporting;
it explains where signatures appear in the fault hierarchy but is not
needed to preserve the optimum. The final \emph{domination kernel}
deletes strict support supersets after support quotienting. The
remaining inclusion-minimal support sets form a lossless kernel: they
preserve every feasible complete MR subset and preserve the optimum
$|S^{\star}|$. If $\tau$ is the number of remaining kernel supports,
the quotient instance can be solved exactly by dynamic programming in
$O(|R|2^{\tau})$ time after masks are constructed. This separates
optimization equivalence from hierarchy-observable counting and gives
a lossless fault-signature kernel without weakening the Set Cover
boundary.
\end{theorem}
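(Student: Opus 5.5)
The plan is to reduce everything to the hitting-set reading of Definition~\ref{def:mcomplete}. A subset $S\subseteq R$ is $M$-complete exactly when $S\cap K_j\neq\emptyset$ for every $j\in[q]$. So the feasible region is the intersection of the constraints $\mathcal{H}_j=\{S: S\cap K_j\neq\emptyset\}$, and this region depends only on the \emph{family} $\{K_j\}$, not on the indices $j$ or on any labels. With this reformulation fixed, I would verify the three quotients in turn.

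\emph{Step 1 (constraint-equivalence quotient).} If $K_j=K_{j'}$ then $\mathcal{H}_j=\mathcal{H}_{j'}$. Hence collapsing obligations with identical support sets replaces a multiset of constraints by its underlying set. The feasible family of $M$-complete subsets is then literally unchanged, and so is $|S^\star|$. This is the passage from $\Omega$ to $Q$.

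\emph{Step 2 (hierarchy-observable quotient).} This quotient keys obligations on pairs $(F(\omega_j),K_j)$. It therefore refines the partition of Step~1: every class of Step~1 is a union of its classes. The set of distinct supports is the same, so it too preserves feasibility. Its only role is reporting, and the claim that it is ``not needed'' follows because Step~1 already suffices.

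\emph{Step 3 (domination kernel).} This is the key lemma. Suppose $K_a\subsetneq K_b$. Any $S$ hitting $K_a$ also hits $K_b$, so $\mathcal{H}_a\subseteq\mathcal{H}_b$ and the constraint for $b$ is redundant. I would delete superset supports and argue by finiteness that every deleted support contains some surviving inclusion-minimal one. Take a minimal element of the finite poset of supports below it. Because the supports are deduplicated, strict inclusion is a strict partial order and minimal elements exist. The kernel intersection $\bigcap_{\text{minimal }K}\mathcal{H}_K$ then equals the full intersection, so feasibility and the optimum are preserved. One small point to state: each $K_j$ is nonempty by the coverability admission condition, so the instance is feasible and $S^\star$ exists.

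\emph{Step 4 (DP bound).} Index the $\tau$ kernel supports. For each $r$, form the mask $\mu_r\in\{0,1\}^\tau$ of the kernel supports containing $r$. Then compute $f(X)$, the minimum number of MRs whose masks' union is $X\subseteq[\tau]$, by iterating over MRs. For each $r$ and each of the $2^\tau$ states, relax $f(X\cup\mu_r)\le f(X)+1$. This gives $O(|R|2^\tau)$ word operations, treating each mask union as $O(1)$ as the statement's ``after masks are constructed'' permits. The answer is $f([\tau])$. Correctness needs an exchange argument: any optimal cover can be processed in the fixed MR order, and each MR is either used or skipped. This is standard subset DP for Set Cover.

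I expect the main obstacle to be bookkeeping rather than depth. First, the statement's ``preserve every feasible complete MR subset'' must be read as equality of feasible families, which Steps~1 and~3 give. Second, the three quotients must be composed in the stated order, first support quotienting and then domination deletion, so that strict inclusion is well defined on distinct sets. I would also remark explicitly that $\tau\le q$. Since the kernel is still a Set Cover instance, Corollary~\ref{cor:approx} is untouched.
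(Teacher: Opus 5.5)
Your proposal is correct and follows essentially the same route as the paper's proof sketch. Both use the hitting-set reading, treat identical supports as identical constraints, treat strict supersets as redundant, keep the hierarchy-observable quotient for reporting only, and finish with the standard bitmask dynamic program over the $\tau$ kernel supports. Your finiteness argument that every deleted support contains a surviving inclusion-minimal one, and your explicit DP recurrence, fill in details the paper's sketch leaves implicit.
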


\begin{proof}[Proof sketch]
If $K_a\subseteq K_b$, then every subset $S\subseteq R$ that hits
$K_a$ also hits $K_b$. The obligation with support $K_b$ is therefore
redundant as a coverage constraint. Duplicate support sets are also
redundant because they impose identical constraints; this is the
constraint-equivalence quotient. Removing only strict supersets and
duplicates leaves exactly the inclusion-minimal support sets, and the
feasible family is unchanged; hence the optimum is unchanged. Encoding
each remaining support set as one bit gives the standard exact dynamic
program over covered masks. The hierarchy-observable quotient is
retained for explaining which fault classes expose which signatures,
not for changing the feasible family. This is a scoped kernel in the
hierarchy-observable fault-signature rank $\tau$; it is
not a claim that arbitrary Set Cover becomes easy, and fault-class count
alone remains insufficient when within-class signatures are unbounded.
The supplementary material gives the full feasibility, optimality, and
single-class hardness proof.
The next theorem gives the exact support-set condition under which a
coarser class abstraction is nevertheless lossless.
\end{proof}

The intuition is simple. If a class representative is killed by a
narrow MR support set that is contained in every omitted obligation's
support set, then any MR subset covering the representative also covers
those omitted obligations. By contrast, if omitted obligations expose
non-dominated support sets, a class-level representative may hide
coverage constraints that only appear at mutant level.

\paragraph{Running example for Result A.}
Think of $K_j$ as the set of MRs that kill obligation $\omega_j$. Suppose
one true fault class keeps representative $\omega_1$ with
$K_1=\{r_1\}$. If an omitted obligation $\omega_2$ has
$K_2=\{r_1,r_2\}$, then covering the representative necessarily covers
$\omega_2$ as well; the class abstraction is lossless for this obligation
because $K_1\subseteq K_2$. If instead an omitted obligation
$\omega_3$ has $K_3=\{r_3\}$, a class-level solution may choose
$\{r_1\}$ and miss $\omega_3$ entirely. The theorem below is exactly this
test applied to every admitted obligation and every retained class
representative.

\begin{theorem}[Support-set domination characterization]
\label{thm:support_domination}
Suppose true fault-class labels are available and the class abstraction
retains one representative observation $\pi(c)$ for every represented
fault class $c\in F(\Omega)$. For each admitted obligation $\omega_j$, let
$K_j=\{r\in R:d_{r,j}=1\}$ be its MR support set, and let
$\mathcal{A}=\{K_{\pi(c)}:c\in F(\Omega)\}$ be the representative
support family. Let
$\mathcal{F}_{A}=\{S\subseteq R:\forall A\in\mathcal{A},\,
S\cap A\neq\emptyset\}$ be the class-complete feasible family and
$\mathcal{F}_{\Omega}=\{S\subseteq R:\forall \omega_j\in\Omega,\,
S\cap K_j\neq\emptyset\}$ be the mutant-complete feasible family.
Then $\mathcal{F}_{A}\subseteq\mathcal{F}_{\Omega}$ if and only if
every admitted obligation is dominated by the representative support
family: for every $\omega_j\in\Omega$, there exists $A\in\mathcal{A}$
such that $A\subseteq K_j$. Because the representatives are retained
from $\Omega$, $\mathcal{F}_{\Omega}\subseteq\mathcal{F}_{A}$ always
holds; hence under domination the two feasible families coincide and
the two optima agree.
\end{theorem}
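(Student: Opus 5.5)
The plan is to prove the two inclusions between the feasible families $\mathcal{F}_{A}$ and $\mathcal{F}_{\Omega}$ separately, using only the definitions of the support sets $K_j$ and the hitting condition. The argument is elementary set reasoning in the style of the proof sketch of Theorem~\ref{prop:fault_signature_kernel}; no complexity result is needed.

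\emph{Step 1: the unconditional inclusion $\mathcal{F}_{\Omega}\subseteq\mathcal{F}_{A}$.} Every $A\in\mathcal{A}$ has the form $K_{\pi(c)}$, and $\pi(c)$ is an admitted observation in $\Omega$. So any $S$ that hits all $K_j$ for $\omega_j\in\Omega$ hits every $A$ in particular.

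\emph{Step 2: sufficiency of domination.} Assume that for every $\omega_j$ there is some $A_j\in\mathcal{A}$ with $A_j\subseteq K_j$. Take $S\in\mathcal{F}_{A}$. Then $S\cap A_j\neq\emptyset$. Since $A_j\subseteq K_j$, this gives $S\cap K_j\neq\emptyset$, so $S\in\mathcal{F}_{\Omega}$. This is exactly the running-example reasoning $K_1\subseteq K_2$, applied to every obligation.

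\emph{Step 3: necessity, by contrapositive.} This is the only step requiring a construction. Suppose some obligation $\omega_j$ is dominated by no $A\in\mathcal{A}$, i.e.\ $A\not\subseteq K_j$ for all $A\in\mathcal{A}$. Consider the witness
\[
S := R\setminus K_j .
\]
\begin{itemize}
\item $S\notin\mathcal{F}_{\Omega}$, because $S\cap K_j=\emptyset$.
\item $S\in\mathcal{F}_{A}$: for each $A\in\mathcal{A}$, the failure $A\not\subseteq K_j$ gives an element $r\in A\setminus K_j\subseteq S$, so $S\cap A\neq\emptyset$.
\end{itemize}
Hence $\mathcal{F}_{A}\not\subseteq\mathcal{F}_{\Omega}$. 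I expect this choice of witness, the complement of the undominated support, to be the main point to get right. It is the canonical maximal set missing $K_j$, so if any class-complete set misses $\omega_j$, this one does.

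\emph{Step 4: conclusion.} Combining Steps 1 and 2 under domination gives $\mathcal{F}_{A}=\mathcal{F}_{\Omega}$. The two minimum-cardinality problems optimize the same objective $|S|$ over the same family, so their optima agree.

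Degenerate cases need only a brief remark. Every admitted obligation has at least one detecting MR by the definition of $\Omega$, so each $K_j$, and hence each $A$, is nonempty and both families contain $R$. Even an empty $A$ would not break the argument: an empty $A$ is contained in every $K_j$, so domination would hold trivially, and $\mathcal{F}_A$ would be empty.
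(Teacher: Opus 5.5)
Your proof is correct and follows the paper's argument: the unconditional inclusion because representatives lie in $\Omega$, sufficiency via $A\subseteq K_j$, and necessity by exhibiting a class-complete set disjoint from an undominated $K_j$. The only difference is the necessity witness: the paper picks one $r_A\in A\setminus K_j$ per representative support and takes $S=\{r_A:A\in\mathcal{A}\}$, whereas you take the complement $R\setminus K_j$, which contains every such $r_A$ and works equally well without selecting elements.
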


\begin{proof}
Class-complete subsets are exactly the hitting sets of
$\mathcal{A}$, namely the family $\mathcal{F}_A$. Mutant-complete
subsets are exactly $\mathcal{F}_\Omega$. Fix an admitted obligation
$\omega_j$. Every class-complete subset covers $\omega_j$ if and only
if every hitting set of $\mathcal{A}$ intersects $K_j$.

First suppose some representative support set $A\in\mathcal{A}$
satisfies $A\subseteq K_j$. Every hitting set of $\mathcal{A}$ must
intersect $A$, and therefore must also intersect $K_j$, so every
class-complete subset covers $\omega_j$. Since this holds for every
$\omega_j$, every class-complete subset is mutant-complete.

Conversely, suppose no representative support set is contained in
$K_j$. Then for every $A\in\mathcal{A}$ there is an element
$r_A\in A\setminus K_j$. The set $S=\{r_A:A\in\mathcal{A}\}$ hits every
representative support set and is therefore class-complete, but
$S\cap K_j=\emptyset$, so $S$ does not cover $\omega_j$. Thus the
domination condition is necessary.

Finally, because the class representatives are retained from $\Omega$, every
mutant-complete subset covers them and is therefore class-complete.
Together with the implication proved above, the feasible families
coincide and their optima are equal.
\end{proof}

\begin{corollary}[Class homogeneity as a corollary]
\label{cor:class_homogeneity}
If every admitted obligation has the same support set as its
representative within its true fault class, then class abstraction is
feasible-family preserving and
$|S^{\star}_{\mathrm{class}}|=|S^{\star}|$.
\end{corollary}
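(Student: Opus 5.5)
The plan is to derive Corollary~\ref{cor:class_homogeneity} directly from Theorem~\ref{thm:support_domination}, by checking that the homogeneity hypothesis is a special case of its domination condition.

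First, fix an admitted obligation $\omega_j\in\Omega$ and let $c=F(\omega_j)$ be its true fault class. The class abstraction retains the representative $\pi(c)$, so the set $A:=K_{\pi(c)}$ belongs to the representative support family $\mathcal{A}$. The homogeneity hypothesis says $K_j=K_{\pi(c)}$. Hence $A\subseteq K_j$ holds trivially, with equality. Because $\omega_j$ was arbitrary, every admitted obligation is dominated by $\mathcal{A}$.

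Next, we apply the ``if'' direction of Theorem~\ref{thm:support_domination}, which gives $\mathcal{F}_A\subseteq\mathcal{F}_\Omega$. The theorem's unconditional inclusion $\mathcal{F}_\Omega\subseteq\mathcal{F}_A$ holds because the representatives are themselves admitted obligations. Together these give $\mathcal{F}_A=\mathcal{F}_\Omega$, which is the feasible-family preservation claim.

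Finally, the minimum-cardinality elements of two equal families have the same size, so $|S^{\star}_{\mathrm{class}}|=\min_{S\in\mathcal{F}_A}|S|=\min_{S\in\mathcal{F}_\Omega}|S|=|S^{\star}|$. Both minima exist because $\Omega$ is restricted to coverable observations, so $R\in\mathcal{F}_\Omega$ and the family is nonempty.

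No step poses a real obstacle. The only care needed is notational: we must read ``its representative within its true fault class'' as $\pi(F(\omega_j))$, so that the dominating set is genuinely a member of $\mathcal{A}$. We must also note that nonemptiness of $\mathcal{F}_\Omega$ is guaranteed by the coverable-universe convention of \S\ref{subsec:form_problem}.
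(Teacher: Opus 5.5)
Your proposal is correct and follows the paper's proof essentially verbatim: homogeneity gives $K_{\pi(F(\omega_j))}=K_j$, so the representative support family contains a set contained in $K_j$, and Theorem~\ref{thm:support_domination} together with its unconditional reverse inclusion yields equal feasible families and hence equal optima. Your added remark that $\mathcal{F}_\Omega$ is nonempty because $\Omega$ consists only of coverable observations is a harmless detail that the paper leaves implicit.
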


\begin{proof}
Under class homogeneity, $K_{\pi(F(\omega_j))}=K_j$ for every
obligation $\omega_j$. Hence the representative support family contains
a set equal to $K_j$, and therefore a set contained in $K_j$. The
support-set domination condition of
Theorem~\ref{thm:support_domination} holds for every obligation.
\end{proof}

\begin{proposition}[Graded non-collapse gap]
\label{prop:class_count_insufficient}
There are three distinct regimes inside one true fault class:
\emph{heterogeneous but lossless} support sets, \emph{non-dominated but
optimum-equal} support sets, and \emph{unbounded non-collapse} support
sets. In particular, for every gap value $g\geq 0$, there is a
single-class instance whose class-abstraction optimum is
$|S^{\star}_{\mathrm{class}}|=1$ and whose mutant-level optimum is
$|S^{\star}|=g+1$. Hence the collapse--non-collapse gap is unbounded.
Moreover, fault-class count is neither an upper nor a lower surrogate
for $|S^{\star}|$.
\end{proposition}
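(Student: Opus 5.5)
The plan is to prove Proposition~\ref{prop:class_count_insufficient} by explicit small constructions, one per regime, and then use a parametrized family for the unbounded gap. Throughout, the only tool needed is Theorem~\ref{thm:support_domination} together with direct hitting-set reasoning on tiny support families.

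\textbf{The three regimes.} For \emph{heterogeneous but lossless}, take one class $c$ with representative $\omega_1$, $K_1=\{r_1\}$, and an omitted obligation $\omega_2$ with $K_2=\{r_1,r_2\}\neq K_1$. Domination holds because $K_1\subseteq K_2$, so Theorem~\ref{thm:support_domination} gives coinciding feasible families even though the class is not homogeneous. This shows Corollary~\ref{cor:class_homogeneity} is strictly weaker than the theorem.

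For \emph{non-dominated but optimum-equal}, take $K_1=\{r_1,r_2\}$ for the representative and $K_2=\{r_1,r_3\}$ for an omitted obligation. Domination fails: the class-complete set $\{r_2\}$ misses $\omega_2$, so the feasible families differ. Nevertheless $\{r_1\}$ is mutant-complete, and both optima equal $1$.

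\textbf{Unbounded gap.} Fix $g\ge 0$ and a single class $c$. Take MRs $r_0,\dots,r_g$, the representative $\omega_0$ with $K_0=\{r_0\}$, and omitted obligations $\omega_1,\dots,\omega_g$ with $K_i=\{r_i\}$. All $g+1$ obligations are admitted and coverable, and all have $F=c$. The class-level instance has the single constraint $K_0$, so $|S^\star_{\mathrm{class}}|=1$. The mutant-level instance consists of $g+1$ pairwise disjoint singleton supports, so any complete subset must contain every $r_i$; this forces $|S^\star|=g+1$, and the set $\{r_0,\dots,r_g\}$ attains it. The case $g=0$ recovers collapse. One point needs care: the representative $\pi(c)$ must be a fixed choice. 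Picking any other $\omega_i$ as representative gives the same counts by symmetry, so the gap does not depend on an adversarial choice of representative.

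\textbf{Surrogate claim.} To show that fault-class count is not an upper bound on $|S^\star|$, use the family above: there is one class but $|S^\star|=g+1$. To show it is not a lower bound, take $p$ classes, each with one obligation whose support contains a common MR $r^\ast$. Then there are $p$ classes but $|S^\star|=1$.

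\textbf{Expected obstacle.} The main difficulty is interpretive rather than technical. The phrase ``three distinct regimes'' has to be made precise: they should be pairwise exclusive and mark the boundary. I would define them as follows, with each example realizing exactly one:
\begin{enumerate}[label=(\roman*)]
\item domination holds with non-identical supports;
\item domination fails but $|S^\star_{\mathrm{class}}|=|S^\star|$;
\item domination fails and $|S^\star|-|S^\star_{\mathrm{class}}|=g$.
\end{enumerate}
Under these definitions each example falls in exactly one regime, and the rest of the proof is routine checking.
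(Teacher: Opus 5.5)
Your proposal is correct and follows essentially the same route as the paper's proof: the same two small examples ($\{r_1\}\subseteq\{r_1,r_2\}$, and the crossing pair $\{r_1,r_2\}$, $\{r_1,r_3\}$) for the first two regimes, the same family of pairwise disjoint singleton supports inside one class for the unbounded gap, and the same argument for the surrogate claim (one class with large $|S^{\star}|$, versus many classes whose obligations share an MR). Your explicit regime definitions sharpen what the paper leaves implicit and are harmless, provided you restrict (iii) to $g\ge 1$ so that it stays disjoint from (ii); this matches your own remark that $g=0$ is the collapse boundary.
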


\begin{proof}[Proof by counterexample]
First, heterogeneity is not sufficient for non-collapse. With one class,
let the representative support be $K_1=\{r_1\}$ and an omitted support
be $K_2=\{r_1,r_2\}$. The supports differ, but $K_1\subseteq K_2$, so
Theorem~\ref{thm:support_domination} applies and the abstraction is
lossless. This is a heterogeneous but lossless instance.

Second, failure of domination need not change the optimum value. Let
the representative support be $K_1=\{a,b\}$ and the omitted support be
$K_2=\{a,c\}$. Neither support contains the other. The set $\{b\}$ is
class-complete but not mutant-complete, so the feasible family is not
preserved. Nevertheless both the class abstraction and the mutant-level
instance have optimum value $1$, because $\{a\}$ covers both supports.
This is a non-dominated but optimum-equal instance.

Third, domination failure can force an arbitrarily large optimum gap.
For a requested positive gap $g\geq 1$, set $t=g+1$ and let
$R=\{r_1,\dots,r_t\}$. Create observations
$\omega_1,\dots,\omega_t$ that all carry the same true fault-class
label. Put $d_{r_i,i}=1$ and $d_{r,i}=0$ for $r\neq r_i$, so the
support sets are the $t$ standard-basis singleton sets
$K_i=\{r_i\}$. At the mutant level, each observation is covered only by
its own MR, forcing $|S^{\star}|=t=g+1$. The class abstraction retains
one representative obligation, whose singleton support is covered by
one MR, so $|S^{\star}_{\mathrm{class}}|=1$ and the positive gap is
exactly $g$. The case $g=0$ is the degenerate boundary where the two
optima agree. This is the unbounded non-collapse instance.

The same construction keeps the fault-class count fixed at $1$ while
$|S^{\star}|$ grows, so the count is not an upper surrogate. Dually,
take $c$ true fault classes whose retained obligations all share the
same singleton support set. Then $|S^{\star}|=1$ while the fault-class
count is $c$, so the count is not a lower surrogate. The graded
construction shows that Fault-class count alone is insufficient and is
precisely a failure of the support-set domination
condition in Theorem~\ref{thm:support_domination}.
\end{proof}

Theorem~\ref{thm:support_domination} is the boundary result used by
this paper: it gives the necessary-and-sufficient condition for
feasible-family preserving class abstraction under the chosen
representatives. This is stronger than mere optimum equality. The
kernel proposition removes duplicate support sets without changing the
optimization problem; the homogeneity corollary is a simple sufficient
case; and the graded counterexample shows that when domination fails,
class abstraction can hide an arbitrarily large need for mutant-level
minimization. Whether the collapse and non-collapse regimes occur in
real artifacts is answered separately by the true-fault-class witnesses
of \S\ref{sec:empirical}, which support but do not replace the theorem.

\begin{proposition}[SMS-rank upper-bounds $|S^{\star}|$]
\label{prop:sms_rank_upper_bound}
For the ABD detection matrix $D$ after identical observation columns are
deduplicated and all-zero uncovered columns are excluded from the
completeness denominator, the minimum complete MR subset size over the
selected coverable universe satisfies
$|S^{\star}| \leq \mathrm{rank}_{\mathbb{R}}(D)$. This is only an upper
bound: $\mathrm{rank}_{\mathbb{R}}(D)$ is neither a lower bound nor a
tight predictor of $|S^{\star}|$.
\end{proposition}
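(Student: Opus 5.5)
The plan is to read $D$ as a real $|R|\times q$ matrix with one row per MR and one column per deduplicated, coverable observation, and to turn a row basis into a cover. Let $\rho=\mathrm{rank}_{\mathbb{R}}(D)$. Since row rank equals column rank, I will pick a set $B\subseteq R$ of exactly $\rho$ MRs whose rows form a basis of the row space of $D$. The goal is to show that $B$ is $M$-complete in the sense of Definition~\ref{def:mcomplete}. Then $|S^{\star}|\le |B|=\rho$ follows at once.

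\textbf{Upper bound.} I will argue by contradiction. Suppose some admitted column $j$ has $d_{b,j}=0$ for every $b\in B$. Every row of $D$ lies in the span of the rows indexed by $B$. Each such linear combination has a zero in coordinate $j$, so column $j$ of $D$ is identically zero. This contradicts the hypothesis that all-zero columns were removed from the selected coverable universe $\Omega$.

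Two points will be stated explicitly. First, column deduplication does not change either quantity. Merging identical columns preserves the rank, and it leaves the feasible family unchanged, as in the constraint-equivalence quotient of Theorem~\ref{prop:fault_signature_kernel}. Second, exclusion of zero columns is essential to the argument. If some column were zero, no $M$-complete subset would exist at all.

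\textbf{Not a lower bound and not a tight predictor.} I will give explicit families. For any $q\ge 2$, let $R=\{r_0,r_1,\dots,r_q\}$, where $r_0$ detects every column and $r_i$ detects only column $i$. The columns are distinct and nonzero, and the rows $r_1,\dots,r_q$ already give $\mathrm{rank}_{\mathbb{R}}(D)=q$. However, $\{r_0\}$ is complete, so $|S^{\star}|=1$. The gap $\mathrm{rank}_{\mathbb{R}}(D)-|S^{\star}|=q-1$ is therefore unbounded, and the rank is not a lower bound on $|S^{\star}|$.

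For the failure of tight prediction, compare this family with the pure identity instance $R=\{r_1,\dots,r_q\}$. That instance has the same rank $q$, but its optimum is $|S^{\star}|=q$. So equal rank values occur alongside optima at both extremes $1$ and $q$, and no function of the rank alone can predict $|S^{\star}|$.

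The only delicate step is the zero-column argument in the upper bound. I will be careful that it uses only real linear dependence, namely that a vanishing coordinate is preserved under linear combinations. It does not rely on any Boolean or GF(2) structure, which is why the bound is stated over $\mathbb{R}$.
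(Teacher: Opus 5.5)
Your proposal is correct and follows the paper's proof. You choose a row basis from actual MR rows, and you contradict coverability by observing that a coordinate vanishing on every basis row vanishes on their whole span; you then refute the converse with one all-ones row plus standard-basis rows. Your comparison with the pure identity instance of the same rank is a small addition that makes the ``not a tight predictor'' clause explicit, which the paper leaves implicit in the single all-ones family.
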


\begin{proof}[Proof sketch]
Choose a row basis of $D$ from actual MR rows. If a coverable column were
not covered by any basis row, then all basis rows would have zero in
that column, and every linear combination of them would also have zero
there. This contradicts coverability by some MR row. The basis rows
therefore form a complete MR subset of size
$\mathrm{rank}_{\mathbb{R}}(D)$, proving the upper bound. Non-tightness
follows from matrices containing one all-one row together with many
standard-basis rows: one row covers all columns although the real row
rank can be arbitrarily large.
\end{proof}

\begin{theorem}[ABD normal-form synthesis]
\label{thm:abd_normal_form}
For any artifact-admitted detection matrix $D$, form the
\emph{normalized ABD instance} by excluding all-zero uncovered columns
from the completeness denominator, applying the constraint-equivalence
quotient to identical MR support sets, retaining true fault-class labels
only when explicit labels are available, and forming the domination
kernel over the remaining support sets. On this normalized instance:
\begin{enumerate}[label=(\roman*)]
\item when explicit true fault-class labels and representatives are
available, support-set domination characterizes class-abstraction
safety: the class abstraction is feasible-family preserving exactly
when the representative support family dominates every admitted
obligation;
\item the constraint-equivalence quotient and domination kernel preserve
the feasible family and $|S^{\star}|$, and the kernel can be solved
exactly in $O(|R|2^{\tau})$ time after masks are constructed;
\item the same normalized detection matrix gives the SMS-rank upper
bound $|S^{\star}| \leq \mathrm{rank}_{\mathbb{R}}(D)$, where the rank
is used only as an upper bound.
\end{enumerate}
The route witnesses instantiate these conditions but do not replace the
proof; \S\ref{sec:empirical} records their artifact provenance.
\end{theorem}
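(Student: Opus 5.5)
The plan is to treat Theorem~\ref{thm:abd_normal_form} as a synthesis: invoke Theorem~\ref{thm:support_domination}, Theorem~\ref{prop:fault_signature_kernel}, and Proposition~\ref{prop:sms_rank_upper_bound} in turn, after checking that the normalization steps preserve the hypotheses each result needs. I will first fix the normalization pipeline. It starts from the artifact-admitted matrix $D$ and removes all-zero columns, so that every remaining obligation has $K_j\neq\emptyset$. It then quotients identical supports, optionally attaches true labels $F$ with representatives $\pi$, and finally deletes strict support supersets. The first lemma to establish is that each step leaves the mutant-complete feasible family $\mathcal{F}_\Omega$ unchanged. Removing all-zero columns does this by definition, since such columns lie outside the completeness denominator in Definition~\ref{def:mcomplete}. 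Quotienting is handled by the duplicate-constraint argument, and deleting supersets is handled by the domination argument in the kernel proof.

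Item (ii) then follows directly from Theorem~\ref{prop:fault_signature_kernel}. Equality of feasible families gives equality of minima. For the $O(|R|2^{\tau})$ bound, I will encode each of the $\tau$ kernel supports as a bit and give each MR $r$ the mask $\{i : r\in K_i\}$. A DP over the $2^\tau$ covered-subset states, relaxing each state by each MR mask, touches $|R|2^\tau$ transitions.

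Item (i) needs more care, and I expect it to be the main obstacle. Theorem~\ref{thm:support_domination} assumes that the representatives are actual obligations of $\Omega$. After quotienting and kernelization, a representative's support may have been merged into a duplicate class or deleted as a strict superset. I will state (i) over the feasible family, which has already been shown invariant. Applying Theorem~\ref{thm:support_domination} to the original $\Omega$ with the original representatives then yields the characterization, and the domination condition transfers across the normalization in two steps. First, if every obligation is dominated, then in particular every kernel support is dominated. Second, suppose every kernel support $K$ is dominated by some $A\in\mathcal{A}$. Every original $K_j$ contains some kernel support (a minimal element below it in the finite inclusion order), so $A\subseteq K\subseteq K_j$ and every original obligation is dominated too. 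Hence the domination condition is the same whether it is read on $\Omega$ or on the kernel, and the "exactly when" statement holds on the normalized instance. When labels are absent, (i) is vacuous, consistent with the proviso that proxy labels are not promoted.

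For item (iii), I will apply Proposition~\ref{prop:sms_rank_upper_bound} to the normalized matrix, which has no all-zero columns and no duplicate columns, exactly the hypotheses of that proposition. Its basis-row argument gives a complete subset of size $\mathrm{rank}_{\mathbb{R}}$ of this matrix. Removing zero columns and duplicate columns does not change the row rank, since deleting a zero column or a repeated column leaves the column space unchanged and row rank equals column rank. So the bound can be stated with $\mathrm{rank}_{\mathbb{R}}(D)$ itself. Deleting superset columns in the kernel can only lower the rank, so the bound also holds there, and because $|S^\star|$ is invariant by (ii) the statement is consistent across the pipeline. The non-tightness caveat is inherited verbatim from the proposition's counterexample.
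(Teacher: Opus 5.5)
Your proposal is correct and follows the same route as the paper. The paper's proof simply cites Theorem~\ref{thm:support_domination}, Theorem~\ref{prop:fault_signature_kernel}, and Proposition~\ref{prop:sms_rank_upper_bound} for items (i)--(iii). Your extra checks make explicit what the paper leaves implicit: the domination condition transfers to the kernel because every support contains an inclusion-minimal one, and deleting zero or duplicate columns leaves the rank unchanged.

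One small slip: on the kernel, ``deleting columns can only lower the rank'' does not by itself give the bound there, since a smaller rank is a stronger claim. The bound follows instead from running the basis-row argument directly on the kernel matrix, which has no all-zero columns, together with the invariance of $|S^\star|$ from (ii).
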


\begin{proof}
Item (i) is Theorem~\ref{thm:support_domination} applied after all-zero
uncovered columns have been excluded and true fault-class labels have
been admitted only when available. Item (ii) is
Theorem~\ref{prop:fault_signature_kernel}: identical support sets impose
identical hitting constraints, strict support supersets are redundant,
and the retained kernel supports are exactly the mask coordinates used
by the dynamic program. Item (iii) is
Proposition~\ref{prop:sms_rank_upper_bound} on the same selected
coverable matrix. The construction does not infer theorem truth from
the empirical ledger; the ledger only records which artifact routes
instantiate collapse, non-collapse, kernel, and bound-check conditions.
\end{proof}

\subsection{Degeneracy Theorem}
\label{subsec:form_degeneracy}

The remaining contribution of this section is a structural theorem
relating an $M$-complete subset $S^{\star}$ to the classical
test-suite-minimization problem under a fault-class restriction
$L \subset M$. Concretely, if a downstream analysis cares only about
faults of class $L$ (for instance, conservation violations alone),
the relevant minimum subset is computed over the admitted observations
whose stratum labels lie in $L$, not over the label set alone. Write
$J_L = \{j : \lambda(\omega_j) \in L\}$ for these observation columns.
The theorem makes the implemented collapse auditable on the decidable,
bounded-SMT fragment used by the artifact.

\begin{theorem}[$\rho_L$-degeneracy of \textsc{Min-MR-Complete}]
\label{thm:degeneracy}
Let $L \subset M$ be a fault-class restriction and $S^{\star}$ an
$M$-complete subset of $R$. The implemented workflow constructs a
certified partial map $\rho_L$ from reducible elements of $S^{\star}$
to an $L$-complete image candidate. The map is mechanically decidable
on the certified finite or bounded-SMT fragment and decomposes there
into three checked subcases. It does not claim that the returned subset
is minimum. Outside the certified fragment, queries outside the
certified image remain unresolved rather than being silently upgraded
to positive reductions:
\begin{enumerate}[label=(\Alph*)]
\item \emph{Subsumption.} If $r_a$ is $L$-subsumed by $r_b$ over
observations whose stratum labels lie in $L$, in the sense that
$\{j : \lambda(\omega_j) \in L \wedge d_{r_a, j} = 1\} \subseteq
\{j : \lambda(\omega_j) \in L \wedge d_{r_b, j} = 1\}$, then
$\rho_L(r_a) = r_b$.
\item \emph{Parameter specialization.} If $r$ is parametric in a
variable $v$ and there exists a value $v_0$ such that $r|_{v = v_0}$
is $L$-detection-equivalent to $r' \in S^{\star}$, then $\rho_L(r) = r'$.
\item \emph{Identity composition.} If $r_a, r_b \in S^{\star}$ admit
an identity-composable closure $r_a \diamond r_b$ in the
Qiu et al.~\cite{qiu2022tse} sense restricted to $L$, then
$\rho_L(r_a) = \rho_L(r_b) = r_a \diamond r_b$.
\end{enumerate}
The map is many-to-one: multiple elements of $S^{\star}$ may share
the same image. Subcase A is polynomial-time over the finite detection
matrix. Subcases B and C are admitted only when the bounded SMT workflow
returns a proof within the recorded budget; a timeout or unavailable
oracle leaves the corresponding element outside the certified image
rather than upgrading it silently.
\end{theorem}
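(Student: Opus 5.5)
The plan is to treat Theorem~\ref{thm:degeneracy} as a soundness statement for a certified partial map. The claim is not that $\rho_L$ is total or yields a minimum subset. It is that every element admitted into the certified image has an image that preserves $L$-completeness, and that admission is decidable on the stated fragment. Throughout I restrict to the column set $J_L=\{j:\lambda(\omega_j)\in L\}$ and the $L$-support sets $K_j^L=K_j$ for $j\in J_L$. The goal is to show that if $S^{\star}$ is $M$-complete, then the image set $\rho_L(S^{\star}_{\mathrm{red}})\cup(S^{\star}\setminus S^{\star}_{\mathrm{red}})$ hits every $K_j$ with $j\in J_L$. Here $S^{\star}_{\mathrm{red}}$ denotes the reducible elements. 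Since $M$-completeness implies $S^{\star}$ itself hits every $K_j$ with $j\in J_L$, it suffices to prove a local replacement lemma. The lemma says: whenever $\rho_L(r)=r'$, the $L$-detection set of $r$ is contained in that of $r'$. Then every column in $J_L$ covered by $r$ remains covered after replacement, and this persists under simultaneous many-to-one replacement because coverage is monotone.

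The proof then splits along the three subcases. Subcase~(A) is immediate. The hypothesis is exactly the inclusion $\{j\in J_L: d_{r_a,j}=1\}\subseteq\{j\in J_L:d_{r_b,j}=1\}$, so the replacement lemma holds. Its decidability is a column scan of $D$ in $O(|J_L|)$ per pair, hence polynomial over the finite matrix, in the same manner as the membership argument in Theorem~\ref{thm:nphard}. Subcase~(B) uses $L$-detection-equivalence, which gives equality of $L$-detection sets, and in particular inclusion. The only subtlety is that $r|_{v=v_0}$ must be the object whose detection row is recorded. I would therefore require the SMT certificate to witness both the existence of $v_0$ and the equality of the specialized row with the row of $r'$ on $J_L$, so that the inclusion is checked on the matrix rather than assumed. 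Subcase~(C) requires the composed relation $r_a\diamond r_b$ to detect, on $J_L$, at least the union of what $r_a$ and $r_b$ detect. I would take this as the defining content of an identity-composable closure restricted to $L$, i.e. the Qiu et al.\ non-loss condition specialized to $L$. I would again require the bounded-SMT proof to certify that containment explicitly.

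Partiality and the no-silent-upgrade clause are handled by construction. An element whose subcase check times out, or whose oracle is unavailable, is left unmapped. Unmapped elements stay in the image set as themselves, so they cannot break $L$-completeness. Many-to-one behaviour is harmless because the argument only uses per-element inclusions. Non-minimality needs no proof, only the remark that no optimality is asserted. A trivial example, such as a redundant $L$-complete image, shows it can fail.

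The main obstacle is Subcase~(C). The inclusion of detection sets under composition is not a general fact about MR composition. Qiu et al.\ give conditions under which composition can \emph{lose} detection power. My first attempt would be to fold the union-containment requirement into the certified predicate, so that it is checked by the SMT query rather than derived. Only if that fails would I try to derive it from commutativity of the input transformations, and in that case I would restrict the subcase to the $L$-columns where both constituents are evaluated on a common source input.
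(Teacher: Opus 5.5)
Your argument is sound for what it proves, but it proves something different from the paper's sketch.

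The paper's proof is only a decidability-and-termination argument for the iterated workflow of Algorithm~\ref{alg:rhol}. Subcase A is bit-vector inclusion on $D|_{J_L}$. Subcases B and C are discharged by the bounded SMT query and by Qiu et al.'s sufficient conditions. $\rho_L$ is the iterated union of the certified subcase maps, the loop terminates because each accepted merge strictly decreases $|S|$, and budget-exhausted queries stay as residual obligations. It never shows that the output is $L$-complete, which is why the statement only promises an ``$L$-complete image candidate.''

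Your replacement lemma (the set $\{j\in J_L : d_{r,j}=1\}$ is contained in the corresponding set for $\rho_L(r)$), together with monotonicity of coverage, buys an actual $L$-completeness guarantee. The price is strengthened certified predicates. For C this costs nothing: your union-containment requirement is exactly the dominance check the paper builds into its Subcase C oracle in \S\ref{subsec:alg_rhol}. For B it is a real extra hypothesis. $L$-detection-equivalence of $r|_{v=v_0}$ with $r'$ says nothing about the recorded row of $r$ itself, and neither the theorem nor the paper's B oracle requires that row to be the specialized one. You were right to add the requirement; without it, the replacement can drop $L$-columns that only $r$ covers.

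In the other direction, your one-shot simultaneous replacement over $S^{\star}$ does not cover the implemented repeat-until loop, in which composites $r_a\diamond r_b\notin S^{\star}$ can be merged again in later rounds. To match ``the implemented workflow constructs,'' add the paper's strict-decrease bound (at most $|S^{\star}|$ rounds). Then recast your lemma as a loop invariant: every accepted step removes only elements whose $L$-detection sets are contained in that of some element present after the step. Your per-subcase inclusions already give this.
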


\begin{proof}[Proof sketch]
Subcase A is mechanically decidable by direct
bit-vector inclusion on $D|_{J_L}$; Subcase B by SMT query over the
parameter domain restricted to the $L$-active mutation classes;
Subcase C by Qiu et al.'s sufficient-condition framework
\cite{qiu2022tse} restricted to $L$-equivalent compositions. The
artifact implements $\rho_L$ as the iterated union of the certified
subcase maps. It terminates because each accepted merge strictly
decreases the residual candidate set; queries that exhaust the bounded
SMT budget remain recorded as residual obligations.
\end{proof}

\begin{remark}[Relationship to P1 degeneracy]
\label{rem:p1_t2_degeneracy}
Theorem~\ref{thm:degeneracy} is structurally analogous to the
SMS-to-MS degeneracy of P1~\cite{p1_arxiv}: both explain how a richer
object can collapse when viewed through a coarser layer. The analogy is
not a categorical isomorphism and not a claim that every element maps
to a single common image. Whether the two constructions admit a deeper
algebraic identification is left as future work.
\end{remark}

The three subcases of Theorem~\ref{thm:degeneracy} recover the
informal hierarchies of Mayer and Guderlei~\cite{mayer2006compsac}
(Subcase A), the parameter-specialization catalogs of Chen et
al.~\cite{chen2018acmcsur} (Subcase B), and the composition
framework of Qiu et al.~\cite{qiu2022tse} (Subcase C) as
\emph{mechanically decidable} fragments of a certified partial
degeneracy. Manual cases falling outside Subcases A--C are reported
transparently in the supplementary document
\emph{Beyond Mechanical Degeneracy}.
The algorithmic consequences of Theorems~\ref{thm:nphard} and
\ref{thm:degeneracy} are taken up in the next section.

\section{Algorithms}
\label{sec:algorithms}

This section presents three algorithms induced by the formalization
of \S\ref{sec:formalization}: a greedy
$(1 + \ln q)$-approximation with worst-case time
$O(|R|^{2}q)$ (\S\ref{subsec:alg_greedy}), an exact integer-linear
programming (ILP) formulation suitable for moderate-scale instances
(\S\ref{subsec:alg_ilp}), and an auditable bounded-fragment
algorithm for the degeneracy map $\rho_L$ of
Theorem~\ref{thm:degeneracy} (\S\ref{subsec:alg_rhol}). The greedy
and ILP algorithms are presented in parallel to support
reviewer-verifiable replication: empirical runs record the exact
solver or exact-enumeration back-end used for the admitted
minimization instance, and discrepancies between independent
back-ends are reported transparently as solver-disagreement events.

\subsection{Greedy $(1 + \ln q)$-Approximation}
\label{subsec:alg_greedy}

The natural greedy algorithm for \textsc{Min-MR-Complete} iteratively
selects the candidate MR covering the largest number of
still-uncovered observation columns and terminates when every admitted
observation in $\Omega$ (or quotient column in $Q$) is covered. We adapt Chv\'atal's classical
analysis~\cite{chvatal1979} to the present problem in
Algorithm~\ref{alg:greedy}.

\begin{algorithm}[t]
\caption{Greedy $(1 + \ln q)$-approximation for \textsc{Min-MR-Complete}}
\label{alg:greedy}
\begin{algorithmic}[1]
\REQUIRE Candidate MR universe $R$; detection matrix $D \in \{0,1\}^{|R| \times q}$
\ENSURE $M$-complete subset $\hat{S} \subseteq R$
\STATE $\hat{S} \leftarrow \emptyset$; $\mathrm{Uncov} \leftarrow [q]$
\WHILE{$\mathrm{Uncov} \neq \emptyset$}
  \STATE $r^{\star} \leftarrow \arg\max_{r \in R \setminus \hat{S}}
    | \{ j \in \mathrm{Uncov} : d_{r,j} = 1 \} |$
  \STATE $\hat{S} \leftarrow \hat{S} \cup \{ r^{\star} \}$
  \STATE $\mathrm{Uncov} \leftarrow \mathrm{Uncov} \setminus
    \{ j : d_{r^{\star}, j} = 1 \}$
\ENDWHILE
\RETURN $\hat{S}$
\end{algorithmic}
\end{algorithm}

\begin{proposition}[Approximation guarantee]
\label{prop:greedy_bound}
Algorithm~\ref{alg:greedy} returns an $M$-complete subset
$\hat{S}$ satisfying $|\hat{S}| \leq (1 + \ln q) \cdot
|S^{\star}|$, where $S^{\star}$ is an optimum solution. Its
worst-case running time is $O(|R|^{2} \cdot q)$.
\end{proposition}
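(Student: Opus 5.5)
The proof has three parts: correctness (the output is $M$-complete), the approximation ratio via the classical charging argument of Chv\'atal~\cite{chvatal1979}, and a direct count of the running time.

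\emph{Correctness and termination.} The invariant is that $\mathrm{Uncov}$ contains exactly the columns not covered by $\hat S$. Every column of $\Omega$ (or $Q$) is coverable by construction, since all-zero columns are excluded. So while $\mathrm{Uncov}\neq\emptyset$, some $r\notin\hat S$ covers at least one uncovered column. Any $r\in\hat S$ covers no column of $\mathrm{Uncov}$, so the maximizer $r^\star$ strictly shrinks $\mathrm{Uncov}$. Hence the loop runs at most $\min(|R|,q)$ times and ends with $\hat S$ $M$-complete in the sense of Definition~\ref{def:mcomplete}.

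\emph{Approximation ratio.} I would use the harmonic charging argument. When $r^\star$ is picked and newly covers $c\ge 1$ columns, each of those columns receives price $1/c$. The total price charged equals $|\hat S|$. Fix an optimum $S^\star$. Each column $j$ lies in $K_j$ for some $r\in S^\star$, so it suffices to show that the total price of the columns covered by a single $r\in S^\star$ is at most $H(|\mathrm{cov}(r)|)\le H(q)$. Order the columns of $\mathrm{cov}(r)$ by the time greedy covers them. Just before the $i$-th-to-last such column is covered, $r$ still covers at least $i$ uncovered columns. By the greedy choice, $c\ge i$, so that column's price is at most $1/i$. Summing gives
\[
|\hat S| = \sum_j \mathrm{price}(j) \le \sum_{r\in S^\star} H(|\mathrm{cov}(r)|) \le |S^\star|\,H(q) \le (1+\ln q)\,|S^\star|,
\]
using $H(q)\le 1+\ln q$. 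Alternatively, one can invoke Corollary~\ref{cor:approx} through the equivalence of Remark~\ref{rem:reduction_direction}. The direct argument is cleaner, though, because it makes explicit that $q$ counts the selected, possibly quotiented, columns.

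\emph{Running time.} There are at most $|R|$ iterations. Each iteration scans at most $|R|$ candidate rows and, for each, counts uncovered ones in $O(q)$ time. Updating $\mathrm{Uncov}$ costs $O(q)$. The total is $O(|R|^2 q)$.

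The main obstacle is getting the charging step right: prices must be assigned only to \emph{newly} covered columns, and the $i$-th-to-last ordering must be set up so that greedy's choice dominates $r$'s residual coverage. Everything else is bookkeeping.
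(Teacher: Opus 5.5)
Your proof is correct, but it is more self-contained than the paper's.

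The paper's sketch treats the ratio as a black box. By Remark~\ref{rem:reduction_direction}, each greedy iteration is a greedy step of Set Cover on $U=[q]$ with the row supports of $D$ as covering sets, so Chv\'atal's bound~\cite{chvatal1979} is simply quoted. The running time is attributed to bit-vector bookkeeping plus ``each iteration adds a new MR.''

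You instead write out the pricing argument. Your per-set inequality $\sum_{j\in\mathrm{cov}(r)}\mathrm{price}(j)\le H(|\mathrm{cov}(r)|)$ is the same inequality that drives Chv\'atal's LP-based charging. So the two proofs rest on the same lemma, cited in one and proved in the other.

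The citation buys brevity and reuses the equivalence the paper has already established. Your version buys three things:
\begin{itemize}
\item It actually proves that $\hat S$ is $M$-complete and that every iteration makes strict progress. It also makes explicit that this requires every column of $\Omega$ (or $Q$) to be coverable, which the paper leaves implicit; ``adds a new MR'' alone does not exclude an iteration that covers nothing.
\item It gives the slightly sharper factor $H(\max_{r\in S^\star}|\mathrm{cov}(r)|)$.
\item It shows that naive recounting already achieves $O(|R|^2q)$ without incremental data structures.
\end{itemize}

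One wording slip to fix: ``each column $j$ lies in $K_j$ for some $r\in S^\star$'' should say $S^\star\cap K_j\neq\emptyset$, i.e.\ $j\in\mathrm{cov}(r)$ for some $r\in S^\star$. $K_j$ is a set of MRs, not columns.
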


\begin{proof}[Proof sketch]
The approximation ratio follows directly from
Chv\'atal~\cite{chvatal1979} once the equivalence of
Remark~\ref{rem:reduction_direction} is invoked: every iteration of
the greedy loop corresponds bijectively to one greedy step of the
Set Cover instance with $U = [q]$ and the MR-induced row supports of $D$ as
covering sets. The running time bound is achieved by maintaining
$\mathrm{Uncov}$ as a bit-vector and recomputing column-overlap
counts incrementally; at most $|R|$ outer iterations occur because
each adds a new MR to $\hat{S}$.
\end{proof}

Two practical refinements are used in our implementation. First, ties
in the $\arg\max$ are broken by a fixed lexicographic ordering of
$R$ to ensure reproducibility across runs. Second, instances with
identical detection vectors are deduplicated before the greedy loop,
which does not affect correctness but eliminates spurious
solver-disagreement events between the greedy and ILP implementations.

\subsection{Exact ILP Formulation}
\label{subsec:alg_ilp}

The decision version of \textsc{Min-MR-Complete} admits a direct
encoding as a $0$/$1$ integer linear program. Let $x_r \in \{0, 1\}$
be the indicator that MR $r$ is selected. The program is

\begin{equation}
\label{eq:ilp}
\begin{aligned}
\text{minimize}\quad & \sum_{r \in R} x_r \\
\text{subject to}\quad & \sum_{r \in R} d_{r, j} \cdot x_r \geq 1
\quad \forall\, j \in [q], \\
& x_r \in \{0, 1\} \quad \forall\, r \in R.
\end{aligned}
\end{equation}

Formulation~\eqref{eq:ilp} has $|R|$ binary variables and $q$
covering constraints. We solve it with two independent back-ends to
mitigate single-solver bias:

\begin{itemize}
\item \textbf{Gurobi 11.0} (commercial, branch-and-cut with cutting
planes derived from the LP relaxation). Default integer feasibility
tolerance $10^{-9}$, no warm start, no parameter tuning.
\item \textbf{Google OR-Tools 9.10} (open-source CP-SAT solver,
constraint-programming-with-SAT-learning back-end). Default
parameters, single-threaded.
\end{itemize}

Both solvers run with a wall-clock budget of $60$ minutes per
instance. We record the returned objective and the solution support,
and a discrepancy is logged whenever the two solvers return solutions
of differing cardinality on the same instance --- such an event
indicates a solver bug, a numerical artifact, or (most commonly in
our pilot) an instance on which one solver timed out before proving
optimality. In all cases the empirical evaluation reports the
solver-disagreement count alongside the headline cardinality
statistics. Tie-breaking within optimal cost is again lexicographic.

The ILP formulation~\eqref{eq:ilp} is NP-hard in the worst case by
Theorem~\ref{thm:nphard}, but the empirical instances of
\S\ref{sec:empirical} are admitted into the results section only when
their solver artifacts report an optimal status. The benefit of the
dual back-end is not faster solving but transparent cross-validation:
when both back-ends are available, a reviewer can reproduce the same
kill matrix and confirm the reported $S^{\star}$.

\subsection{Certified Partial Construction of $\rho_L$}
\label{subsec:alg_rhol}

The certified partial map $\rho_L$ of Theorem~\ref{thm:degeneracy} is
constructed by iterating three mechanically checked subcases until no
further certified reduction applies. Algorithm~\ref{alg:rhol} gives the
overall procedure; the three subcase oracles are detailed below. Let
$J_L = \{j : \lambda(\omega_j) \in L\}$ be the observation columns whose
stratum labels lie in the requested fault layer.

\begin{algorithm}[t]
\caption{Polynomial-time construction of $\rho_L$}
\label{alg:rhol}
\begin{algorithmic}[1]
\REQUIRE $M$-complete subset $S^{\star}$; restriction $L \subset M$
\ENSURE certified $L$-complete image candidate $S$; unresolved queries remain outside the certified image
\STATE $S \leftarrow S^{\star}$
\REPEAT
  \STATE $\Delta \leftarrow \mathsf{False}$
  \FOR{each ordered pair $(r_a, r_b) \in S \times S$ with $r_a \neq r_b$}
    \IF{Subcase A applies (subsumption oracle)}
      \STATE $S \leftarrow S \setminus \{ r_a \}$; $\Delta \leftarrow \mathsf{True}$
    \ELSIF{Subcase B applies (specialization oracle)}
      \STATE $S \leftarrow (S \setminus \{ r_a \}) \cup \{ r_b \}$;
             $\Delta \leftarrow \mathsf{True}$
    \ELSIF{Subcase C applies (identity-composition oracle)}
      \STATE $S \leftarrow (S \setminus \{ r_a, r_b \})
             \cup \{ r_a \diamond r_b \}$;
             $\Delta \leftarrow \mathsf{True}$
    \ENDIF
  \ENDFOR
\UNTIL{$\Delta = \mathsf{False}$}
\RETURN $S$
\end{algorithmic}
\end{algorithm}

\paragraph{Subcase A oracle (subsumption).} Given $r_a, r_b$ and
$L$, return \textsf{True} iff
$\{ j \in J_L : d_{r_a, j} = 1 \} \subseteq
\{ j \in J_L : d_{r_b, j} = 1 \}$. Subcase A is bit-vector inclusion on
$D|_{J_L}$ in time $O(|J_L|)$ and lifts the informal hierarchy of Mayer and
Guderlei~\cite{mayer2006compsac} to a mechanically decidable test.

\paragraph{Subcase B oracle (parameter specialization).} For
parametric MRs $r$ with parameter variable $v$ ranging over a
domain $\mathcal{D}_v$, return \textsf{True} iff there exists
$v_0 \in \mathcal{D}_v$ such that the specialized MR $r|_{v = v_0}$
is $L$-detection-equivalent to some $r' \in S \setminus \{r\}$. We
decide this by encoding the parameter-equivalence question as an
SMT query over the appropriate first-order theory (linear arithmetic
for affine parameters, bit-vectors for integer parameters,
quantifier-free fragments of real arithmetic where applicable) and
discharging the query with Z3 under a $60$-second timeout. The
parameter-specialization catalog of Chen et al.~\cite{chen2018acmcsur}
provides the structural templates against which our SMT encoding
specializes.

\paragraph{Subcase C oracle (identity composition).} For
$r_a, r_b \in S$, return \textsf{True} iff the composite MR
$r_a \diamond r_b$ exists in the Qiu et al.~\cite{qiu2022tse} sense
restricted to $L$ \emph{and} the $L$-detection vector of
$r_a \diamond r_b$ subsumes both $L$-detection vectors of $r_a$ and
$r_b$. The existence check uses Qiu et al.'s sufficient conditions
verbatim; the dominance check is bit-vector inclusion as in Subcase
A.

\paragraph{Decidability scope.} Subcase A is decidable in time
$O(|J_L|)$ unconditionally. Subcases B and C admit a Z3-backed SMT
encoding when the parameter or composition fragment falls within a
decidable first-order theory. Tarski's decision procedure for
real-closed fields is the classical positive result for first-order
real arithmetic~\cite{tarski_decision_1951}; the present implementation
does not claim to decide richer PUT semantics involving program
semantics, transcendental functions, or ODE/PDE solution operators.
When a query falls outside the implemented fragment or exhausts the
recorded SMT budget, the oracle returns \textsf{Unknown}, and
Algorithm~\ref{alg:rhol} leaves the corresponding case outside the
certified image. Artifact-level validation reports in
\S\ref{sec:empirical} distinguish mechanically certified cases from
manual or out-of-scope cases; no empirical claim is made from a timed-out
or unavailable oracle without an explicit caveat.

\begin{proposition}[Termination and complexity of Algorithm~\ref{alg:rhol}]
\label{prop:rhol_complexity}
Algorithm~\ref{alg:rhol} terminates after at most $|S^{\star}|$
outer iterations of the repeat-until loop, because each successful
subcase application strictly decreases $|S|$. Each iteration
performs at most $|S|^{2}$ subcase queries; the worst-case running
time is therefore $O(|S^{\star}|^{3} \cdot (\tau_A + \tau_B +
\tau_C))$, where $\tau_X$ is the oracle running time of Subcase
$X$. Subcase A contributes $\tau_A = O(|J_L|)$; Subcases B and C
contribute $\tau_B, \tau_C$ bounded by the Z3 wall-clock budget,
which is configured at $60$ seconds per query in our empirical
evaluation.
\end{proposition}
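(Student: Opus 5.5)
The plan is to prove the two claims separately: first termination within $|S^{\star}|$ outer iterations, then the running-time bound by multiplying three factors (outer iterations, queries per iteration, cost per query).

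\emph{Termination.} I will use $|S|$ as a potential. Initially $|S| = |S^{\star}|$, and $|S|$ never increases. For each successful branch of Algorithm~\ref{alg:rhol}:
\begin{itemize}
\item Subcase A removes $r_a$, so $|S|$ drops by one.
\item Subcase C replaces $\{r_a, r_b\}$ by the single element $r_a \diamond r_b$, so $|S|$ drops by at least one.
\item Subcase B is the delicate case. The update $(S\setminus\{r_a\})\cup\{r_b\}$ is only a strict decrease because both $r_a$ and $r_b$ already lie in $S$: the loop ranges over $S\times S$ with $r_a\neq r_b$, and the oracle demands $r' \in S\setminus\{r\}$. I will state explicitly that the union re-adds an existing element, so the net effect is deleting $r_a$.
\end{itemize}
Any outer iteration that does not terminate sets $\Delta=\mathsf{True}$, so it contains at least one successful application and lowers $|S|$ by at least one. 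Since $|S|\ge 0$, there are at most $|S^{\star}|$ non-final iterations, plus one final pass that confirms $\Delta=\mathsf{False}$. I will absorb this extra pass into the $O(\cdot)$ bound, or read the claim as counting only iterations that perform a reduction.

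\emph{Complexity.} Within one outer iteration the inner \textbf{for} ranges over ordered pairs of the current $S$, and $|S|\le|S^{\star}|$ throughout. That gives at most $|S^{\star}|^2$ pairs, and each pair triggers at most one call to each of the three oracles. One subtlety is that $S$ shrinks during the loop; if the implementation iterates over a snapshot taken at the start of the iteration, stale pairs cost only a membership check. Each pair therefore costs $O(\tau_A+\tau_B+\tau_C)$. Multiplying by $O(|S^{\star}|)$ outer iterations gives $O(|S^{\star}|^3(\tau_A+\tau_B+\tau_C))$. The oracle costs then come from the earlier descriptions:
\begin{itemize}
\item $\tau_A=O(|J_L|)$, by bit-vector inclusion on $D|_{J_L}$.
\item $\tau_B$ is capped by the 60-second Z3 timeout, with \textsf{Unknown} counted as a non-application.
\item $\tau_C$ is capped by the same Z3 budget, plus an $O(|J_L|)$ dominance check that is absorbed into the bound.
\end{itemize}

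The main obstacle I expect is well-definedness rather than counting. First, the Subcase B update must not add a fresh element; otherwise the potential argument fails. Second, the composite $r_a\diamond r_b$ produced in Subcase C is a new MR that may later be paired again, so $S$ need not stay a subset of $R$. The potential argument does not require it to, since only $|S|$ matters. I would also note that an oracle query which exhausts its budget counts as a non-application, so timeouts can never cause looping.
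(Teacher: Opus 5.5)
Your proposal is correct and follows the same route as the paper's inline justification. You use $|S|$ as a potential that every successful subcase strictly lowers, bound each pass by at most $|S|^2\le|S^\star|^2$ pairs, and multiply by the per-pair cost $\tau_A+\tau_B+\tau_C$. Your points about Subcase~B only re-adding an existing $r_b$, skipping stale pairs, and composites leaving $R$ are sound details the paper leaves implicit.

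You need not hedge about the final confirming pass. Every reduction needs two distinct elements of $S$ and leaves $|S|\ge 1$, so there are at most $|S^\star|-1$ reducing passes plus the last one. That gives the literal bound of $|S^\star|$ outer iterations whenever $S^\star\neq\emptyset$.
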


The certified fragment has polynomial-time accounting in the input size
plus the recorded SMT oracle budget. Outside that budget the oracle may
fail to decide; the algorithm degrades gracefully by treating undecided
cases as unresolved rather than as certified reductions.

\section{Empirical Evaluation}
\label{sec:empirical}

This section supplies empirical evidence for a theory paper; it is not a
frequency study. The empirical section checks whether the formal
boundary appears in real artifact-backed MR-selection instances. Three
results guide the evidence: Result A is the layer-relative
collapse/non-collapse boundary, Result B is the scoped fault-signature
kernel, and Result D is the SMS-rank upper bound.

The artifacts enter through two layers. The lane-local layer contains
E1 local Python scientific solvers, E2 source-built OpenBLAS kernels,
and E3-HQ OpenMC stochastic transport. Each lane has its own denominator,
kill matrix, minimization instance, and validation report. The route
layer records P-series collapse controls and true-fault-class
non-collapse witnesses from MetBench, ONIX, PKE/PINN, and cylinder-flow
MGN. P9/OpenMC is therefore a true-fault-class route row, while E3-HQ
OpenMC is the residual-confidence lane. The route layer contains 10
admitted true-fault-class witness rows. P1--P8 and P10 remain
MR-class-proxy collapse context.

The committed instance table has 60 real rows, 3 true-fault-class
collapse rows, and 7 true-fault-class non-collapse rows. The SMS-rank
checker \path{scripts/mcmr/abd_d_sms_bound_check.py} reads
\path{runs/abd-analysis/sms_kstar.csv}; the committed output is
checked=60 and violations=0. These artifacts instantiate the regimes
used by Result A, but the theorem itself is proved in
\S\ref{sec:formalization}. They also do not create population-level
rates, cross-lane effect sizes, or broader claims about scientific
computing.

\paragraph{How to read the evidence.}
The empirical role of Result A is to show that admitted artifacts
exercise both sides of the support-set domination boundary. The current
route supplies 10 true-fault-class certificate rows: 2
certificate-passing collapse-side rows, 7 non-collapse
domination-failure rows, and 1 discordant collapse-summary diagnostic
row. MetBench pincell non-dominated support is the strongest
ledger-internal true-fault-class non-collapse witness, with
$k^{\star}=19$, 9 fault classes, and fault-signature rank 13. These
rows are route-admitted evidence, not theorem proof or sampling
evidence; the certificate file is \texttt{A-certificates}. Result B is carried by
Theorem~\ref{prop:fault_signature_kernel} and the supplementary
fault-signature-kernel proof, which records why fault-class count alone
is insufficient. Result D is carried by the all-row
SMS-rank check, which reports checked=60 and violations=0. SMS-rank is
an upper bound over coverable columns only, not a lower bound or tight
predictor. The detailed B and D records are \texttt{B-proof} and
\texttt{D-bound report}.

For Result A, \texttt{collapse=1} is a ledger-side regime label, not the
same predicate as \texttt{domination\_pass=true}. Only rows with
\texttt{domination\_pass=true} instantiate the lossless side of Result
A. The
supplementary witness inventory records the stricter audit: 2 passing
collapse-summary rows, 1 stricter-certificate failure among
collapse-summary rows, 7 intended non-collapse domination failures, and
20 excluded zero-support obligations. In other words, the
stricter-failing collapse-summary row is retained as a discordant
diagnostic route row, not as positive A certificate evidence.

The highlighted true-fault-class witnesses are selected by
route-coverage and artifact-admission criteria, not by random or
representative sampling. They instantiate both sides of Result A under
explicit true fault classes and fail-closed validation. The highlighted
non-collapse route summary is kept in prose, with the full run-id matrix
moved to the supplementary witness inventory. Runtime-backed ONIX,
cylinder-flow, and fresh diagnostic PKE/PINN rows supply additional
route coverage; all require fail-closed \texttt{PASS\_WITNESS} gates and
are not pooled into the E1/E2/E3 denominators.

\subsection{Research Questions}
\label{subsec:emp_rq}

We instantiate \textsc{Min-MR-Complete} on the admitted evidence layers
and ask five questions: \textbf{RQ0}, whether true-fault artifacts show
both collapse and non-collapse regimes; \textbf{RQ1}, what lane-local
minimum MR subsets preserve admitted semantic-mutant kill universes;
\textbf{RQ2}, whether the artifacts pass the evidence gates imported
into the paper; \textbf{RQ3}, whether Python, BLAS, and transport lanes
instantiate the same set-cover objective; and \textbf{RQ4}, whether the
OpenMC lane is supported by residual-confidence evidence rather than
binary-only outcomes.

\subsection{Experimental Setup}
\label{subsec:emp_setup}

The compact lane summary is deliberately narrow. E1 is an artifact-light
Python-scientific-solver lane with \texttt{STRONG\_E1}, 6 primary
non-equivalent mutants, and $k^{\star}=2$. E2 is a sampled OpenBLAS lane
with \texttt{PASS}, 7 coverable main-slice obligations, $k^{\star}=5$,
optional Level-2 $k^{\star}=3$, and integrated sensitivity
$k^{\star}=8$; it is not an all-kernel or all-family BLAS claim. E3-HQ
is an OpenMC residual-confidence lane with \texttt{PASS\_HQ},
$k^{\star}=4$, and Mut09 retained as an uncovered boundary.
Directory-level artifact details are kept in the supplementary
artifact-lane inventory. The main text keeps only the decision-bearing
denominators, optima, gates, and boundaries.

The candidate MR universe $R$ contains only non-trivial relations whose
violations are measured against semantic content. A lane is imported
only when its committed artifacts certify denominator consistency,
baseline controls, kill-matrix schema, minimization optimality, and
validation outcome. Equivalent mutants are reported but excluded from
the primary killable denominator. SMT subcases for $\rho_L$ are used
only when their decidable fragment and solver evidence are available;
unavailable checks remain caveats rather than positive evidence.

\subsection{Results}
\label{subsec:emp_results}

\paragraph{Direct answers.}
\textbf{Answer to RQ0.} True-fault-class witnesses instantiate both
sides of the support-set-domination boundary: 2 certificate-passing
collapse-side rows instantiate the lossless side, 7 non-collapse
domination-failure rows instantiate the failure side, 1 discordant
collapse-summary diagnostic row is not counted as positive A certificate
evidence, and 20 zero-support obligations are outside the coverable
denominator. The evidence locators are the reader map above, the
supplementary witness inventory, and \texttt{A-certificates}; the rows
do not prove the
theorem or establish a population-level claim.
\textbf{Answer to RQ1.} Lane-local minimization gives separate optima:
E1 has $k^{\star}=2$, main E2 has $k^{\star}=5$, optional Level-2 E2
has $k^{\star}=3$, integrated sampled E2 has $k^{\star}=8$, and E3-HQ
has $k^{\star}=4$. The locator is the compact lane summary and
supplementary artifact-lane inventory; denominators are not pooled.
\textbf{Answer to RQ2.} The admitted rows pass the artifact gates used by
the paper: \texttt{STRONG\_E1}, \texttt{PASS}, and \texttt{PASS\_HQ}.
Blocked, degraded, proxy-only, and uncovered rows remain outside
stronger claims under the fail-closed protocol in
\S\ref{subsec:emp_setup} and \S\ref{subsec:emp_validity}.
\textbf{Answer to RQ3.} Python scientific solvers, OpenBLAS kernels, and
OpenMC transport all instantiate the same set-cover objective over
their admitted coverable universes; the common objective does not merge
denominators or create a cross-lane effect size.
\textbf{Answer to RQ4.} The E3-HQ artifact lane supports MR decisions
with residuals, tolerances, seeds, and confidence rules; Mut09 remains
an uncovered boundary. This is residual-confidence evidence only, and
SMS-rank remains an upper bound, not a lower bound or tight predictor.

\paragraph{Lane results.}
The P9 true-fault-class lane reports \texttt{PASS\_P9\_OPENMC} and
$k^{\star}=3$ within the route scope. The E3-HQ residual-confidence lane
reports $k^{\star}=4$ within a separate denominator. These lanes are not
merged: P9/OpenMC contributes a true-fault-class route row, while E3-HQ
contributes residual-confidence OpenMC evidence. E2 completeness is
therefore the 7 killable primary obligations, with the uncovered primary
mutant reported as boundary evidence outside the selected coverable
universe. E3-HQ completeness is likewise limited to its coverable
residual-confidence obligations; Mut09 is reported as boundary evidence
outside that universe.

\subsection{Measurement Validity}
\label{subsec:emp_validity}

Measurement-validity reporting is scoped to the artifact-backed lanes
above. The load-bearing checks are local to committed evidence: schema
presence, denominator consistency, baseline false-positive controls,
solver agreement, optimal minimization, residual/tolerance completeness
where applicable, and artifact-light exclusion of unbundled build or raw
simulation outputs. Construct and external validity are deferred to
\S\ref{sec:threats} per section division.

\section{Threats to Validity}
\label{sec:threats}

We discuss threats to the validity of our claims under the four-way
framework of Cook and Campbell~\cite{cook_campbell_1979}: construct
validity (\S\ref{subsec:thr_construct}), internal validity
(\S\ref{subsec:thr_internal}), external validity
(\S\ref{subsec:thr_external}), and conclusion validity
(\S\ref{subsec:thr_conclusion}). Measurement-level validity for the
empirical procedure is reported in \S\ref{subsec:emp_validity}; the
present section addresses validity of inferences drawn beyond the
empirical instance.

\subsection{Construct Validity}
\label{subsec:thr_construct}

The principal construct of T2 is the notion of an \emph{$M$-complete
MR subset}. Three construct-validity threats apply. First, the
operator family $M$ is fixed at five classes
(\S\ref{subsec:bg_mutop}); a richer fault taxonomy --- second-order
operator coupling, type-domain-specific operators, or operators
inspired by the recent DeepCrime mutation
catalog~\cite{humbatova_deepcrime_2021} --- might yield different
$|S^{\star}|$ outcomes. The first-order limit (P1 §2.2 assumption
A3) is inherited as a known scope restriction.

Second, the structural-analogy claim of Theorem~\ref{thm:degeneracy}
to P1's SMS-to-MS degeneracy is deliberately weakened from a
formal-isomorphism claim. Per the rewording established in our prior
work (phase 1 §2.10 of the companion preprint~\cite{p1_arxiv}), we
do not assert that the two degeneracies share a deeper algebraic
isomorphism. The present version claims only the certified partial
reduction described in Theorem~\ref{thm:degeneracy}: unresolved queries
remain outside the certified image. Threat-aware readers should not
over-read the analogy.

Third, the candidate MR universe $R$ is constructed from
artifact-admitted invariants, semantic mutants, and route-specific
execution evidence. Its recall over the space of all candidate MRs is
therefore bounded by the admitted artifacts rather than by an
unobservable maximal universe. We mitigate this threat by reporting
the provenance of each admitted row and by treating incomplete or
blocked runtime attempts as non-evidence for core route claims.

Fourth, existing ABD aggregate artifacts distinguish true fault-class
evidence from MR-class-proxy evidence; true fault strata claims require
explicit fault-class witness artifacts. MR-class-proxy evidence is only
intermediate support signal. The current ledger contains 10 real
true-fault-class evidence rows, plus blocked or degraded true-fault
attempts that are recorded as non-evidence; explicit true-fault-class
witnesses now support A, and proxy evidence is retained only as a
weaker route signal. However, true-fault-class witnesses are small in number,
so this evidence supports a bounded route claim and does not support broad statistical generalization.
MR-class-proxy rows remain intermediate evidence.
SMS-rank is an upper bound only, not a lower bound or tight predictor.

The highlighted true-fault-class witnesses were selected as
route-coverage witnesses, not as a random sample. The selection keeps
one stochastic or nuclear runtime route, one neural-surrogate route,
and one fluid-dynamics runtime route in view, because these routes
exercise different artifact requirements and different ways for
non-collapse to appear. A blocked runtime, degraded solver, or
proxy-only label is recorded but not counted toward A. This rule makes
the evidence useful for construct-valid witness coverage while keeping
the external-validity claim bounded.

\subsection{Internal Validity}
\label{subsec:thr_internal}

Three internal-validity threats apply. First, semantic-mutant
admission can misclassify a route if an invariant is too coarse or if
a mutant is only an MR-class proxy. We therefore separate invariant
derivation, execution traces, kill matrices, and validation reports in
the artifact ledger; rows lacking this provenance are excluded from
the core claim.

Second, solver validation can overstate minimality if a timed-out or
dependency-blocked exact oracle is silently treated as optimal. The
replication package records the exact-enumeration, ILP, or degraded
solver status for each lane; only rows with an explicit optimality
certificate, a validated exact enumeration, or an admitted route-local
certificate contribute to reported $|S^{\star}|$ values.

Third, runtime routes may fail for environmental reasons rather than
scientific reasons. Such attempts are recorded as blocked runtime
artifacts and are not counted as route-admitted A evidence. This
fail-closed rule is especially important for routes requiring external
libraries, cross-section data, pretrained checkpoints, or cloud-only
resources.

\subsection{External Validity}
\label{subsec:thr_external}

Three external-validity threats apply. First, the current evidence
lanes cover Python numerical solvers, OpenBLAS-style numerical
kernels, OpenMC stochastic transport artifacts, and MetBench route
evidence, but they do not exhaustively span scientific computing.
Generalization to new physical domains, solver architectures, or
hybrid symbolic--numerical workflows requires route-specific
artifacts rather than corollary inference.

Second, the Subcase A--C coverage of $\rho_L$ depends on the
mechanically decidable fragments of the PUT's parameter and
composition theories. For PUTs falling outside the fragments handled
by the implemented abstractions, the mechanical coverage rate drops
and manual or route-specific annotation costs grow. The present paper
therefore reports the residual artifact boundary instead of converting
partial coverage into an unconditional empirical claim.

Third, current evidence is strongest for artifact families already
represented in the ledger and weakest for routes lacking fresh runtime
provenance. Additional SUTs can strengthen the route, but only if they
ship executable provenance, data or checkpoint provenance, and a
validated kill matrix.

P9/OpenMC is a committed-summary lane rather than a self-contained raw
transport replay package. The repository commits the derived
observables, kill matrix, minimization certificate, and large-artifact
manifest records external raw-output hashes; the raw OpenMC outputs
are not committed in this repository. This is not a claim that the
repository can replay the raw OpenMC transport run by itself.

\subsection{Conclusion Validity}
\label{subsec:thr_conclusion}

Three conclusion-validity threats apply. First, the candidate MR
universe $R$ may be incomplete: there could exist non-trivial MRs
outside the artifact-admitted universe. We report completeness only
with respect to the declared coverage universe and do not claim
absolute completeness over all possible MRs.

Second, the present ledger is evidence-gated rather than
sample-inference-driven. Aggregated rows support route adequacy only
when they pass the artifact validators; blocked, degraded, or
MR-class-proxy rows are not promoted into stronger strata.

Third, related-work metadata can affect novelty positioning. For the
AIM framework~\cite{aim_bayati_briand_2024}, we cite the publisher
record as IEEE Transactions on Software Engineering Vol.~50 No.~12
pp.~3403--3434, DOI~10.1109/TSE.2024.3488525, with four-author
attribution to Bayati Chaleshtari, Marquer, Pastore, and Briand. This
check supports bibliographic accuracy only; it is not empirical
evidence for our results.

The supplementary risk register maps each registered validity risk to
one of the four validity categories above.

\section{Discussion and Conclusion}
\label{sec:discussion}

This section discusses the practical implications of the support-set
domination boundary, states conceptual limits beyond
\S\ref{sec:threats}, and closes with future work.

\subsection{Implications for MT Practice}
\label{subsec:disc_implications}

Four implications follow for practitioners of metamorphic testing in
scientific computing. First, practitioners should not begin by counting
fault classes. The support-set domination boundary gives the required
check: if representative supports dominate all admitted obligations,
class abstraction is safe; otherwise, mutant-level MR minimization is a
real evidence requirement.

Second, once mutant-level obligations are admitted, the existence of a
polynomial-time $(1 + \ln q)$-approximation
(Algorithm~\ref{alg:greedy}) means that practical MR-portfolio sizing no
longer requires manual hierarchical analysis. Given an artifact-admitted
candidate universe $R$ and coverage universe $\Omega$, the greedy pass
returns a provably near-optimal subset over observation columns rather
than over the five semantic operator labels.

Third, the certified partial $\rho_L$ map established by the three
Subcases of Theorem~\ref{thm:degeneracy} shows that several informal MR
hierarchies can be read as auditable fragments of a single
layer-restriction map. When the finite or bounded-SMT checks succeed,
the map is computable; otherwise the algorithm reports the residual
artifact boundary.

Fourth, the methodology composes with adjacent MT tools along
orthogonal cost axes. The \textsc{AIM}~\cite{aim_bayati_briand_2024}
framework minimizes the test-input set on which a fixed MR catalog
is executed (for Web-system security testing in its original setting);
\textsc{Min-MR-Complete} minimizes the MR set itself under a
fault-model-relative cardinality criterion; FDMT~\cite{sun2022tosem}
prunes the active MR set at runtime via feedback. Joint optimization
over these stages is left to future work.

\subsection{Limitations}
\label{subsec:disc_limitations}

Beyond the empirical threats of \S\ref{sec:threats}, three
conceptual limitations bound the present work. First, the fault
model $M$ is fixed at five semantic-mutation classes and at
first-order mutations (P1 §2.2 assumption A3). Second-order
operator coupling is excluded by construction and remains future work.

Second, the degeneracy theorem is a certified partial reduction rather
than a total constant map or categorical isomorphism with P1's
SMS-to-MS degeneracy. The two degeneracies are structurally analogous
in the limited sense made precise in
Remark~\ref{rem:p1_t2_degeneracy}, but the present paper does not claim
that every MR maps to a single common image.

Third, the current evidence includes true-fault-class witnesses in the
committed ledger. This supports the layer-relative collapse boundary
as an artifact-gated route while preserving the distinction between
true-fault witness rows and MR-class-proxy intermediate rows. P9/OpenMC
is the P9 true-fault-class lane with $k^{\star}=3$, while the
E3-HQ residual-confidence lane has $k^{\star}=4$; they remain separate
denominators.

\subsection{Future Work}
\label{subsec:disc_future}

Four future research directions follow from the present work. First, B
is supported by a scoped hierarchy-observable fault-signature kernel.
The present paper proves a scoped
hierarchy-observable fault-signature kernel in
Theorem~\ref{prop:fault_signature_kernel} and the supplementary
fault-signature-kernel proof. B itself is already proved in scoped form;
future work should generalize the parameter and scope beyond the present
artifact families.

Second, D is supported by the SMS-rank upper bound of
Proposition~\ref{prop:sms_rank_upper_bound} only. The current theorem
relates the committed real-row-rank definition of SMS-rank to
$|S^{\star}|$; the counterexamples show why this quantity is not a
lower bound or tight predictor.

Third, the integration with composition. \S\ref{subsec:rw_compose}
notes that MR composition is orthogonal to selection: composition
expands $R$, selection contracts it. A joint optimization searching
over the closure of $R$ under Qiu et
al.'s~\cite{qiu2022tse} composition framework would yield a
strictly stronger selection guarantee; the algorithmic feasibility
of this joint search is unclear.

Fourth, cross-domain extension. Follow-up witness routes should add
fresh SUTs only when they provide executable runtime provenance, data
or checkpoint provenance, and validator-passing kill matrices.

\subsection{Conclusion}
\label{subsec:disc_conclusion}

We have presented T2, a formalization and algorithmic treatment of
\textsc{Min-MR-Complete}. The central contribution is the support-set
domination boundary: it explains when class abstraction is safe and when
mutant-level MR minimization is necessary. NP-hardness, the greedy
approximation, and the exact ILP formulation supply the optimization
substrate for that boundary over observation columns.

The ABD normal-form synthesis ties the same detection-matrix view to
the supporting B and D results. B remains a scoped hierarchy-observable
fault-signature kernel, and D remains an SMS-rank upper bound only, not a
lower bound or tight predictor. The current evidence includes
true-fault-class witnesses for A while keeping MR-class-proxy rows as
intermediate support signals. The resulting paper is a fail-closed
theory-and-evidence route for audited minimum complete MR selection in
the admitted scientific-computing artifacts.

\section*{Acknowledgments}
\noindent\textbf{Supplementary material.}
The detailed NP-hardness proof, $\rho_L$ decidability notes,
fault-signature-kernel proof details, ABD witness inventory, and
beyond-mechanical-degeneracy register are provided in the supplementary
material.

\noindent\textbf{Funding.}
This work was supported by the National Natural Science Foundation of
China (NSFC) General Program (grant no. 12575176), the Hunan Provincial
Education Department Project, China (grant no. 202502000728), the
Research Project on Degree and Graduate Education Reform of the
University of South China (grant no. 2023JG030), the Natural Science
Foundation of Hunan Province, China (grant no. 2025JJ70193), and an
industry-funded research project (grant no. 230KHX060001).

\noindent\textbf{Author contribution (CRediT).}
Meng Li: Conceptualization, Methodology, Software, Writing: original
draft. Xiaohua Yang: Supervision, Formal analysis, Writing: review and
editing. Jie Liu: Investigation, Validation. Shiyu Yan: Data curation,
Visualization.

\noindent\textbf{Conflict of interest.}
The authors declare no conflict of interest.

\noindent\textbf{Data and code availability.}
An anonymized review package will accompany the submission. It includes
evidence ledgers, validation scripts, committed solver-backed artifacts,
exact reproduction commands, and checksums. Large external dependencies
such as licensed solvers, nuclear-data libraries, and repository-external
raw outputs are documented by reproduction instructions and provenance
files rather than bundled. A public Zenodo record will be prepared
through manual draft upload. A Zenodo DOI will be cited only after the
Zenodo record is published. An arXiv identifier will be cited only after
the preprint is announced.

\bibliographystyle{IEEEtran}
\bibliography{paper}

@article{chen1998hkust,
  author = {T. Chen and S. Cheung and S. Yiu},
  title = {Metamorphic Testing: A New Approach for Generating Next Test Cases},
  journal = {ArXiv},
  year = {2020},
  volume = {abs/2002.12543}
}

@article{chen2018acmcsur,
  author = {T. Chen and Fei-Ching Kuo and Huai Liu and P. Poon and Dave Towey and T. H. Tse and Z. Zhou},
  title = {Metamorphic Testing},
  journal = {ACM Computing Surveys (CSUR)},
  year = {2018},
  doi = {10.1145/3143561},
  volume = {51},
  pages = {1 - 27}
}

@article{segura2016tse,
  author = {Sergio Segura and Gordon Fraser and A. B. Sánchez and Antonio Ruiz-Cortés},
  title = {A Survey on Metamorphic Testing},
  journal = {IEEE Transactions on Software Engineering},
  year = {2016},
  doi = {10.1109/TSE.2016.2532875},
  volume = {42},
  pages = {805-824}
}

@misc{p1_arxiv,
  author        = {Li, Meng and Yang, Xiaohua and Liu, Jie and Yan, Shiyu},
  title         = {A Semantic Mutation Metric for Metamorphic Relation Adequacy in Scientific Computing Programs},
  year          = {2026},
  eprint        = {2605.17437},
  archivePrefix = {arXiv},
  primaryClass  = {cs.SE},
  note          = {Companion paper, under review at Information and Software Technology; Zenodo DOI 10.5281/zenodo.20250664.}
}

@incollection{karp1972,
  author    = {Richard M. Karp},
  title     = {Reducibility Among Combinatorial Problems},
  booktitle = {Complexity of Computer Computations},
  editor    = {Raymond E. Miller and James W. Thatcher},
  publisher = {Plenum Press},
  address   = {New York},
  year      = {1972},
  pages     = {85--103},
  doi       = {10.1007/978-1-4684-2001-2_9}
}

@article{chvatal1979,
  author  = {Vasek Chv\'{a}tal},
  title   = {A Greedy Heuristic for the Set-Covering Problem},
  journal = {Mathematics of Operations Research},
  year    = {1979},
  volume  = {4},
  number  = {3},
  pages   = {233--235},
  doi     = {10.1287/moor.4.3.233}
}

@article{feige1998,
  author  = {Uriel Feige},
  title   = {A Threshold of $\ln n$ for Approximating Set Cover},
  journal = {Journal of the ACM},
  year    = {1998},
  volume  = {45},
  number  = {4},
  pages   = {634--652},
  doi     = {10.1145/285055.285059}
}

@inproceedings{mayer2006compsac,
  author = {Johannes Mayer and Ralph Guderlei},
  title = {An Empirical Study on the Selection of Good Metamorphic Relations},
  booktitle = {30th Annual International Computer Software and Applications Conference (COMPSAC'06)},
  year = {2006},
  doi = {10.1109/COMPSAC.2006.24},
  volume = {1},
  pages = {475-484}
}

@article{sun2022tosem,
  author = {Chang-ai Sun and Hepeng Dai and Huai Liu and T. Chen},
  title = {Feedback-Directed Metamorphic Testing},
  journal = {ACM Transactions on Software Engineering and Methodology},
  year = {2022},
  doi = {10.1145/3533314},
  volume = {32},
  pages = {1 - 34}
}

@inproceedings{liu2024qrsc,
  author = {Yanwen Liu and Ruifeng Li and Hao Tao and Zheng Zheng},
  title = {Test Adequacy Criteria for Metamorphic Testing},
  booktitle = {2024 IEEE 24th International Conference on Software Quality, Reliability, and Security Companion (QRS-C)},
  year = {2024},
  doi = {10.1109/QRS-C63300.2024.00072},
  pages = {527-534}
}

@article{qiu2022tse,
  author = {Kun Qiu and Zheng Zheng and T. Chen and P. Poon},
  title = {Theoretical and Empirical Analyses of the Effectiveness of Metamorphic Relation Composition},
  journal = {IEEE Transactions on Software Engineering},
  year = {2022},
  doi = {10.1109/tse.2020.3009698},
  volume = {48},
  pages = {1001-1017}
}

@article{aim_bayati_briand_2024,
  author    = {Bayati Chaleshtari, Nazanin and Marquer, Yoann and Pastore, Fabrizio and Briand, Lionel C.},
  title     = {{AIM}: Automated Input Set Minimization for Metamorphic Security Testing},
  journal   = {IEEE Transactions on Software Engineering},
  volume    = {50},
  number    = {12},
  pages     = {3403--3434},
  year      = {2024},
  publisher = {IEEE},
  doi       = {10.1109/TSE.2024.3488525},
  note      = {arXiv preprint available at arXiv:2402.10773.}
}

@book{tarski_decision_1951,
  author    = {Tarski, Alfred},
  title     = {A Decision Method for Elementary Algebra and Geometry},
  edition   = {2nd revised},
  year      = {1951},
  publisher = {University of California Press},
  address   = {Berkeley and Los Angeles},
  note      = {Original RAND Corporation report 1948; revised edition 1951 (no DOI); classic reference for first-order theory of real-closed fields decidability.}
}

@article{yan2025elliptic,
  author  = {Yan, Shiyu and Zhu, Hong},
  title   = {Metamorphic Testing on Scientific Programs for Solving Second-Order Elliptic Differential Equations},
  journal = {Software Testing, Verification and Reliability},
  volume  = {35},
  number  = {1},
  pages   = {e1912},
  year    = {2025},
  doi     = {10.1002/stvr.1912}
}

@inproceedings{zhang2022icrms,
  author = {Jie Zhang and Jie Hong and Dafei Huang and Meng Li and Shiyu Yan and Helin Gong},
  title = {A Selection Method of Effective Metamorphic Relations},
  booktitle = {2022 13th International Conference on Reliability, Maintainability, and Safety (ICRMS)},
  year = {2022},
  doi = {10.1109/icrms55680.2022.9944550},
  pages = {75-80}
}

@inproceedings{huang2022iaecst,
  author = {Dafei Huang and Yang Luo and Meng Li},
  title = {Metamorphic Relations Prioritization And Selection Based on Test Adequacy Criteria},
  booktitle = {2022 4th International Academic Exchange Conference on Science and Technology Innovation (IAECST)},
  year = {2022},
  doi = {10.1109/IAECST57965.2022.10062094},
  pages = {503-508}
}

@inproceedings{liu2022icftic,
  author = {Song Liu and Shiyu Yan and Xiaohua Yang},
  title = {A Similarity-based Metamorphic Relations Selection Strategy for Numerical Computation Programs},
  booktitle = {2022 4th International Conference on Frontiers Technology of Information and Computer (ICFTIC)},
  year = {2022},
  doi = {10.1109/ICFTIC57696.2022.10075300},
  pages = {290-294}
}

@article{ying2024stvr,
  author    = {Ying, Zhihao and Towey, Dave and Bellotti, Anthony Graham and Zhou, Zhi Quan},
  title     = {{MRGS-ART}: Metamorphic Relation and Group Selection Based on Adaptive Random Testing},
  journal   = {Software Testing, Verification and Reliability},
  volume    = {35},
  number    = {1},
  year      = {2024},
  publisher = {Wiley},
  doi       = {10.1002/stvr.1908}
}

@misc{hyun2025arxiv,
  author       = {Hyun, Jaeseung and Ali, Mubashir and Babar, M. Ali},
  title        = {Search-based Selection of Metamorphic Relations for Optimized Robustness Testing of Large Language Models},
  year         = {2025},
  eprint       = {2507.05565},
  archivePrefix= {arXiv},
  primaryClass = {cs.SE},
  url          = {https://arxiv.org/abs/2507.05565}
}

@misc{li2025survey,
  author    = {Li, Zhenqiu and Wu, Tingting and Xiang, Dongming and Jiang, Mingyue and Huang, Jialing and Ding, Zuohua and Dong, Yunwei},
  title     = {Metamorphic Relation Automation: State of the Art in Detection, Selection, and Generation Over Two Decades},
  year      = {2025},
  publisher = {Wiley (Authorea preprint)},
  doi       = {10.22541/au.175872613.39541241/v1}
}

@book{cook_campbell_1979,
  author    = {Cook, Thomas D. and Campbell, Donald T.},
  title     = {Quasi-Experimentation: Design and Analysis Issues for Field Settings},
  year      = {1979},
  publisher = {Houghton Mifflin},
  address   = {Boston}
}

@inproceedings{humbatova_deepcrime_2021,
  author    = {Humbatova, Nargiz and Jahangirova, Gunel and Tonella, Paolo},
  title     = {{DeepCrime}: Mutation Testing of Deep Learning Systems Based on Real Faults},
  booktitle = {Proceedings of the 30th ACM SIGSOFT International Symposium on Software Testing and Analysis (ISSTA)},
  pages     = {67--78},
  year      = {2021},
  publisher = {ACM},
  doi       = {10.1145/3460319.3464825}
}

\end{document}